\definecolor{Gray}{gray}{0.95}
\definecolor{LightCyan}{rgb}{0.88,1,1}
\newcommand{\isextended}{extended}
\DeclareMathOperator*{\argmin}{arg\,min}
\newtheorem{assumption}{Assumption}
\newtheorem{lemma}{Lemma}
\newtheorem{proposition}{Proposition}
\newcommand{\set}{\mathcal}
\DeclareMathOperator{\Tr}{Tr}
\DeclareMathOperator{\vect}{vec}
\DeclareMathOperator{\supp}{supp}
\DeclareMathOperator{\rank}{rank}
\DeclareMathOperator{\sigmamax}{\sigma_{\text{max}}}
\DeclareMathOperator{\sigmamin}{\sigma_{\text{min}}}
\newtheorem{definition}{Definition}
\newtheorem{corollary}{Corollary}
\newtheorem{remark}{Remark}
\let\oldref\ref
\renewcommand{\ref}[1]{(\oldref{#1})}
\title{\LARGE \bf
Poisoning Attacks against Data-Driven Control Methods\\(Extended version)
}
\author{Alessio Russo$^{1,^\star}$ and Alexandre Proutiere$^{1}$
\thanks{$^\star$ Corresponding Author.}%
\thanks{$^{1}$Alessio Russo and Alexandre Proutiere are in the Division of Decision and Control Systems of the EECS School at KTH Royal Institute of Technology, Stockholm, Sweden.
        {\tt\small \{alessior,alepro\}@kth.se}}

}
\begin{document}

\maketitle
\thispagestyle{empty}
\pagestyle{empty}
\tableofcontents


\begin{abstract}
	This paper investigates poisoning attacks against data-driven control methods. This work is motivated by recent trends showing that, in supervised learning, slightly modifying the data in a malicious manner can drastically deteriorate the prediction ability of the trained model. We extend these analyses to the case of data-driven control methods. Specifically, we investigate how a malicious adversary can poison the data so as to minimize the performance of a controller trained using this data. We show that identifying the most impactful attack boils down to solving a bi-level non-convex optimization problem, and provide theoretical insights on the attack. We present a generic algorithm finding a local optimum of this problem and illustrate our analysis in the case of a model-reference based approach, the Virtual Reference Feedback Tuning technique, and on data-driven methods based on Willems et al. lemma. Numerical experiments reveal that minimal but well-crafted changes in the dataset are sufficient to deteriorate the performance of data-driven control methods significantly, and even make the closed-loop system unstable.
\end{abstract}

\section{Introduction}Data-driven control tuning has been an active area of research for the last few decades \cite{campi2002virtual,formentin2012non,formentin2014comparison,hjalmarsson1998iterative,karimi2004iterative,karimi2007non,sala2005extensions,campestrini2011virtual,lequin2003iterative,de2019formulas,coulson2019data}: it allows the user to perform data-to-controller design, and to avoid identifying a specific model for the plant. Data-driven methods have been used in many systems for which it is difficult to derive, from first-principle, a mathematical model of the plant. This is especially the case for multi-input/multi-output (MIMO) linear time-invariant systems, where it may be hard or too costly to model the interaction between the inputs and the outputs of the system \cite{formentin2012non}.

Unlike adaptive control, where the controller is usually updated in an \textit{online} manner, data-driven methods tune the control policy in an \textit{offline} way.
Specifically, the controller is computed upon a batch of data and an underlying objective function.
]Several data-driven control techniques have been proposed in the literature: iterative methods, such as iterative feedback tuning \cite{hjalmarsson1998iterative} and correlation-based tuning \cite{karimi2004iterative}; one-shot methods such as Virtual Reference Feedback Tuning (VRFT) \cite{campi2002virtual}, the correlation approach \cite{karimi2007non}, and  recent techniques \cite{de2019formulas,coulson2019data} leveraging Willems et al. lemma \cite{willems2005note}.

Data-driven control methods draw similarities to other machine learning techniques. As in supervised and unsupervised learning, these methods make use of a batch of data to train upon. Recently, there has been a surge of interest in studying how a malicious agent can deteriorate the performance of supervised learning methods \cite{goodfellow2014explaining,biggio1}, and more recently, reinforcement learning methods too \cite{russo2019optimal}. It has been shown that through barely perceptible but specific changes in the dataset, namely \textit{data-poisoning attacks}, the malicious agent is capable of reducing the classifier performance by a significant amount \cite{biggio1,jagielski2018manipulating,munoz2017towards}. If attacks can significantly reduce the classification performance by slightly altering the dataset available to the user, it makes it difficult to assess the integrity of the data. 

In this paper, we examine poisoning attacks against data-driven control methods. These methods follow the paradigm of \textit{learning from the data}, and we show that as such, they suffer from the same problem as classical machine learning methods. Data-driven control methods use experimental data to design control laws directly, and we consider an attacker who may affect the data being recorded during the experiment or maliciously change it after the experiment is done. Over the last decade, researchers have developed a risk management framework for cyber-physical systems \cite{chong2019tutorial,sandberg2015cyberphysical}, comprising \textit{detection, prevention} and \textit{treatment} of attacks. To the best of our knowledge, this paper is the first to analyze data-poisoning attacks on data-driven control methods. Our contributions are as follows.\\ 

(i) We formalize the problem of devising efficient poisoning attacks with a bounded amplitude as a bi-level optimization problem (whose objective depends on the attacker's goal, e.g., minimize the learner's performance). In general, this problem is non-convex. We develop gradient-based algorithms to find local optima of the problem and compute poisoning attacks.\\

(ii) We specify our algorithms to the case of VRFT. There, the use of parametrized controller eases the computation of gradients, and we can theoretically quantify the potential impact of poisoning attacks. We also investigate poisoning attacks for data-driven methods based on Willems et al. lemma\cite{willems2005note,de2019formulas}\ifdefined\isextended\else, but due to space constraints, this analysis is presented in the associated technical report  \cite{techreport}\fi.\\

(iii) We experimentally assess the impact of poisoning attacks. It turns out that signals with larger levels of excitation (so as to identify the optimal controller more easily in data-driven methods) may make attacks more efficient (in the sense that the resulting closed-loop system is unstable). Our experiments suggest that poisoning attacks, even with very low amplitude, can significantly deteriorate the efficiency of data-driven control methods. Minimal changes in the dataset can cause system instability.

%
\section{Model and problem formulation}
\subsection{Notation}
We consider discrete-time models, indexed by $t\in \mathbb{N}_0$, and we will indicate by $[N]$ the sequence of integers from $0$ to $N$. We denote by $z$ the one-step forward shift operator and by $\mathcal{H}_2$ the Hardy space of complex functions which are analytic in $|z|< 1$ for $z\in \mathbb{C}$. For a generic signal $x_t \in \mathbb{R}^{n}$, we denote its values in $t\in[N]$ as \[X_{[N]}=\begin{bmatrix}
x_0 & x_1 &\dots & x_{N}
\end{bmatrix}^\top\in \mathbb{R}^{N\times n}\]
and, in case it is clear from the context, we omit the subscript notation. The vectorized version of $X_{[N]}$ is denoted by $\boldsymbol{x}_{[N]}=\vect(X_{[N]})$ (if $l=1$ then $\boldsymbol{x}_{[N]} = X_{[N]})$. For a vector $x\in \mathbb{R}^n$, we denote by $\supp(x)=\{i: x_i\neq 0\}$ the set of non-zero entries of $x$. Similarly, we indicate the $0$-`norm` of $x$ by $\|x\|_0=|\supp(x)|$. For a generic matrix $A$ denote its $i$-th row by $A_{i,*}$, and let $\lambda_i(A)$ be the $i$-th eigenvalue of $A$ (in case $A$ is squared). Also, let $\sigmamax(A),\sigmamin(A)$ be respectively the maximum and minimum singular value of $A$.  Finally, for a vector $x\in \mathbb{R}^n$ and a function $f: \mathbb{R}^n\to \mathbb{R}$, we denote by $\nabla_x f(x)$ the $n$-dimensional vector of partial derivatives, where each element is $\partial_{x_i} f(x)$ with $\partial_{x_i} = \frac{\partial }{\partial x_i}$.

\subsection{Plant dynamics}
Throughout the paper, we consider a controllable and observable discrete time linear single-input/single-output system of the form
\begin{align}\label{eq:system_state_space1}
x_{t+1} &= Ax_t +Bu_t,\quad x_0\in \set X\\
y_t&=Cx_t+Du_t, \label{eq:system_state_space2}
\end{align}
where $x\in \mathbb{R}^n, u,y\in \mathbb{R}$ and $\set X$ is a closed-convex subset of $ \mathbb{R}^n$. Depending on the algorithm being used, we equivalently use transfer function notation and denote the input-output relationship using transfer function notation $y_t = G(z)u_t,$ with $G(z) = C(zI-A)^{-1}B+D$.  We also denote the multiplication of two transfer functions $G(z)$ and $L(z)$  by $GL(z)$ (similarly the sum). Noise is not included in the model, but it is well known that data-driven control methods can be applied also to noisy data, using for example instrumental variables \cite{campi2002virtual}. The procedure and the concepts described in this work can be directly used to extend the method to the noisy case.
\ifdefined\isextended 
\subsection{Persistency of excitation}
One of the methods we study in the paper makes use of the least squares procedure. This method has been widely studied in literature, we will state some useful definitions:\\
\begin{definition}
A bounded locally square integrable vector function $\phi:\mathbb{R}\to\mathbb{R}^d$ is said to be \textit{persistently exciting} if there exists a constant $t_0$  and a positive constant $T_0$ and $\alpha>0$ such that
\end{definition}
\begin{equation}
\frac{1}{T_0} \sum_{k=t}^{t+T_0} \phi_k\phi_k^\top \succeq \alpha I \succ 0,\quad \forall t\geq t_0.
\end{equation}
The concept of PE function is equivalent to the exponential asymptotic stability of the zero solution of the differential equation $\dot e(t) = -\phi(t)\phi(t)^\top e(t)$ (where $e(t)$ usually denotes the error and $\phi(t)\phi(t)^\top$ is a measure of covariance). The PE condition assures the minimum eigenvalue of $\Phi_N= \sum_{t=0}^{N-1} \phi_t \phi_t^\top$ is strictly positive, which in turns implies that $\Phi_N^{-1}$ exists and the estimate $\hat \theta_N$ is unique. In particular, we have that the eigenvalues of $\Phi_N$ grow at-least linearly in $N$, and the following result from \cite{mareels1988persistency} \[	\liminf_{N	\to\infty} \lambda_{\text{min}}\left(\frac{1}{N} \sum_{t=t_0}^{t_0+N} \phi_t\phi_t^\top\right) \succeq \alpha/2\succ 0\]
which implies that the eigenvalues of the correlation matrix $R_N=\frac{1}{N}\Phi_N$ are lower bounded by $\alpha/2$ for $N\to\infty$. 
\fi
\subsection{Data-driven control}
We assume that the learner has no knowledge of the tuple $(A,B,C,D)$, and, that she uses data-driven methods to find an appropriate control law that satisfies some design requirements.
 We denote by $\set D_N =\{(u_t, y_t), t\in [N-1]\}=(\boldsymbol{u}_{[N-1]}, \boldsymbol{y}_{[N-1]})$ the data available to the learner. This data comes from open-loop experiments on the plant.
 
Various data-driven approach have been proposed in the literature over the last decades: model-reference methods, such as VRFT \cite{campi2002virtual} and correlation-based \cite{karimi2007non}; and more recent methods based on Willems et al. lemma \cite{de2019formulas,coulson2019data}. For simplicity of exposition, we will consider a method based on model-reference, and a method based on Willems lemma. Specifically, we will focus on VRFT\cite{campi2002virtual} and the data-driven approach developed in \cite{de2019formulas}.
 
\subsubsection{Virtual reference feedback tuning} In this method, the design requirements are encapsulated into a reference model $M_r(z)$  that captures the desired closed-loop behavior from $r_t$ to $y_t$, where $r_t \in \mathbb{R}^p$ is the reference signal. We assume that $M_r$ satisfies some realizability assumptions, such as being a proper stable transfer function.
 
The  objective of the model-reference control problem is to find a controller $K_\theta(z)$, parametrized by $\theta\in \mathbb{R}^{n_k}$, for which the closed-loop transfer function matches the user defined reference model $M_r(z)$. In other words, we wish to find the parameter  $\theta$ that minimizes the  $\set H_2$ norm of the difference between the reference model and the closed-loop system $\Delta_\theta(z) = M_r(z)-[(I+GK_\theta)^{-1}GK_\theta](z)$:
 \begin{align}\label{eq:criterion_J_MR}
J_{\text{MR}}(\theta) &= \left\|M_r(z)-[(I+GK_\theta)^{-1}GK_\theta](z)\right\|_2^2\\
&= \frac{1}{2\pi} \int_{-\pi}^{\pi} \Tr\left[\Delta_\theta(e^{j\omega}) \Delta_\theta^\top(e^{-j\omega})\right]\textrm{d}\omega.
\end{align}
It is known in the literature that $J_{\text{MR}}(\theta)$ is non-convex in $\theta$, and thus, difficult to optimize. A first simplification, commonly used \cite{campi2002virtual,karimi2007non,formentin2014comparison}, is to make the following assumption.
\begin{assumption}[\cite{campi2002virtual}]\label{assumption1}
The sensitivity function $I-M_r(z)$ is close to the actual sensitivity function $(I+GK_{\hat{\theta}})^{-1}(z)$ in the minimizer $\hat{\theta}$ of \ref{eq:criterion_J_MR}.
\end{assumption}

This allows us to replace the cost $J_{\text{MR}}(\theta)$ by 
\begin{equation}\label{eq:criterion_J}
J(\theta) = \left\|M_r(z)-[(I-M_r)GK_\theta](z)\right\|_2^2.\\
\end{equation}
\ifdefined\isextended
from which follows that the ideal controller $K^\star(z)$ can be defined through $G(z)$ and $M_r(z)$ as \[K^\star(z) = [G^{-1}(I-M_r)^{-1}M_r](z).\]
\begin{remark}\textit{This controller is usually of high order, and non-causal. Furthermore, it may not belong to the class of control laws to which $K_\theta(z)$ belongs to.\\}\end{remark}\fi
To minimize \ref{eq:criterion_J} without identifying the plant, we can resort to minimizing the difference between the input signal $u_t$ from the experiments and the control signal $  K_\theta(z) e_t$ computed using the \textit{virtual error signal}, $e_t$. The latter is defined as $ e_t =  r_t -y_t = (M_r^{-1}(z)-1)y_t$ where $ r_t$ is the \textit{virtual reference signal} computed using the reference model $M_r(z)$ as $ r_t = M_r^{-1}(z)y_t$. 

Minimizing the squared difference between $u_t$ and $ K_\theta(z) e_t$ usually gives a biased estimate of the minimizer $\theta$ of $J$ (that is the case if the controller that leads the cost function to zero is not in the controller set). To circumvent this issue, it is common to introduce a filter $L(z)$ that will pre-filter the data \cite{campi2002virtual}. We can then  define the objective criterion that is actually minimized:
\begin{equation}\label{eq:criterion_J_VR_N}J_{\text{VR}}^{N}(\theta) = \frac{1}{N} \sum_{t=0}^{N-1} \|u_t - K_\theta(z)  e_t\|_2^2\end{equation}
and it can be proven \cite{campi2002virtual} that for stationary and ergodic signals $\{y_t\}$ and $\{u_t\}$ we  get the following asymptotic result 
$\lim_{N\to\infty}J_{\text{VR}}^N(\theta) =J_{\text{VR}}(\theta)$, where:
\begin{align*}J_{\text{VR}}(\theta) &=  \frac{1}{2\pi} \int_{-\pi}^{\pi} \Tr\left[\bar \Delta_\theta(e^{j\omega}) \Phi_{u}(\omega)\bar\Delta_\theta^\top(e^{-j\omega})\right]\textrm{d}\omega\\
\bar \Delta_\theta(z) &\coloneqq I-[K_\theta(I-M_r)G](z),
\end{align*}
with $\Phi_{u}$ being the power spectral density of $u_t$. Let $K^\star$ denote the minimizer over all possible $K$ of $\left\|M_r(z)-[(I-M_r)GK](z)\right\|_2^2$. If $\in K^\star\in \{K_\theta: \theta\}$, then $K^\star$ is also the minimizer of \ref{eq:criterion_J_VR_N}. Otherwise, one can properly choose a filter $L(z)$ to filter the experimental data so that the minimizer of \ref{eq:criterion_J_VR_N} and \ref{eq:criterion_J} still  coincide (refer to \cite{campi2002virtual} for details).

It is worth mentioning that in practice it is assumed that the control law can be linearly parametrized in terms of a basis of transfer functions:
\begin{assumption}\label{assumption2vrft_linear_control}
	The control law $K$ is represented by an LTI system $K_\theta(z)$ that is linearly parametrized in $\theta \in \mathbb{R}^{n_k}$, and we will write $K_\theta(z) = \beta^\top(z)\theta$, with $\beta(z)$ being a vector of linear discrete-time transfer functions of dimension $n_k$.
\end{assumption}

Assumption \oldref{assumption2vrft_linear_control} includes different types of control law, such as PID, and can be relaxed to other types of parametrization (see \cite{sala2005extensions} and \cite{esparza2011neural}).
\\
\subsubsection{Data-driven control based on Willems lemma}
Recently, there has been a surge of interest in data-driven methods that rely on a lemma by Willems et al \cite{willems2005note}, see for example \cite{de2019formulas,coulson2019data}. In \cite{de2019formulas} Theorem 1,  the authors show that for the system $x_{t+1}=Ax_t+Bu_t$, we can equivalently write
\[ \begin{bmatrix}
B & A
\end{bmatrix}= X_{[1,N]}\begin{bmatrix}
U_{[N-1]}\\X_{[N-1]}
\end{bmatrix}^{\dagger}\]
where $X$ and $U$ represent data collected from the system and $\dagger$ denotes the right inverse. Observe that the above representation holds only if the input sequence is an exciting input of order $n+1$ and $\textrm{rank}\begin{bmatrix}U_{[N-1]} & X_{[N-1]}\end{bmatrix}=n+m$ \ifdefined\isextended\else (please refer to \cite{techreport} for a definition of persistently exciting signal)\fi. Furthermore, as shown in Theorem 2 in \cite{de2019formulas}, any closed-loop system with a state-feedback control $u_t=Kx_t$ we have the following equivalent representation:
\begin{equation}x_{t+1}=X_{[1,N]}^\top G_Kx_t\end{equation}
where $G_K$ is a $T\times n$ matrix satisfying
\begin{equation}K=U_{[N-1]}^\top G_K \hbox{ and } I_n=X_{[N-1]}^\top  G_K.\end{equation}
We can equivalently write  $A+BK=X_{[1,N]}^\top G_K$: this allows to treat $G_K$ as a decision variable, and search for a matrix $G_K$ that satisfies some  performance conditions \cite{de2019formulas}. Due to space constraints, our analysis for attacking this data-driven approach is reported in \ifdefined\isextended the appendix\else\cite{techreport}\fi.

\section{Poisoning attacks on Data-Driven control}
In this section we introduce a generic framework that can be used to compute attacks for the different of data-driven methods.

\subsection{Attack framework}
\textbf{Attack description and assumptions.} We assume the goal of the malicious agent is to degrade performance of the closed-loop system by minutely corrupting the available data.
\ifdefined\isextended
The attack is thus considered a poisoning attack, and since the attacker has access to some data of the system, confidentiality is breached. Furthermore, this attack also affects both the \textit{integrity} and \textit{availability} of the data, two of the three fundamental properties in computer security \cite{bishop2002art}, together with confidentiality, and as such, may bring severe damage.
\\\\\fi
As in classical data poisoning analysis \cite{biggio1}, we assume the malicious agent knows the optimization problem being solved by the learner: the latter wishes to find a control law $K$ minimizing a cost $\set L(\set D_N, K)$, also denoted by $\set L(\boldsymbol{u},\boldsymbol{y}, K)$ when $\set D_N=(\boldsymbol{u},\boldsymbol{y})$ (see the previous section for examples). We further assume the malicious agent has no knowledge of the plant, and that she can access the data available to the learner. 
\ifdefined\isextended The malicious agent aims to poison the dataset $\set D_N$  to reduce the closed-loop performance of the system, but, at the same time, make sure the poisoned data does not differ too much from the original data in order to remain stealthy (this last assumption can be easily relaxed).\fi The attack framework is illustrated in Figure \ref{fig:poisoning_scheme}. Notice that the malicious agent affects only the data collected from the plant.
\begin{figure}[t]
\centering
\includegraphics[width=0.85\linewidth]{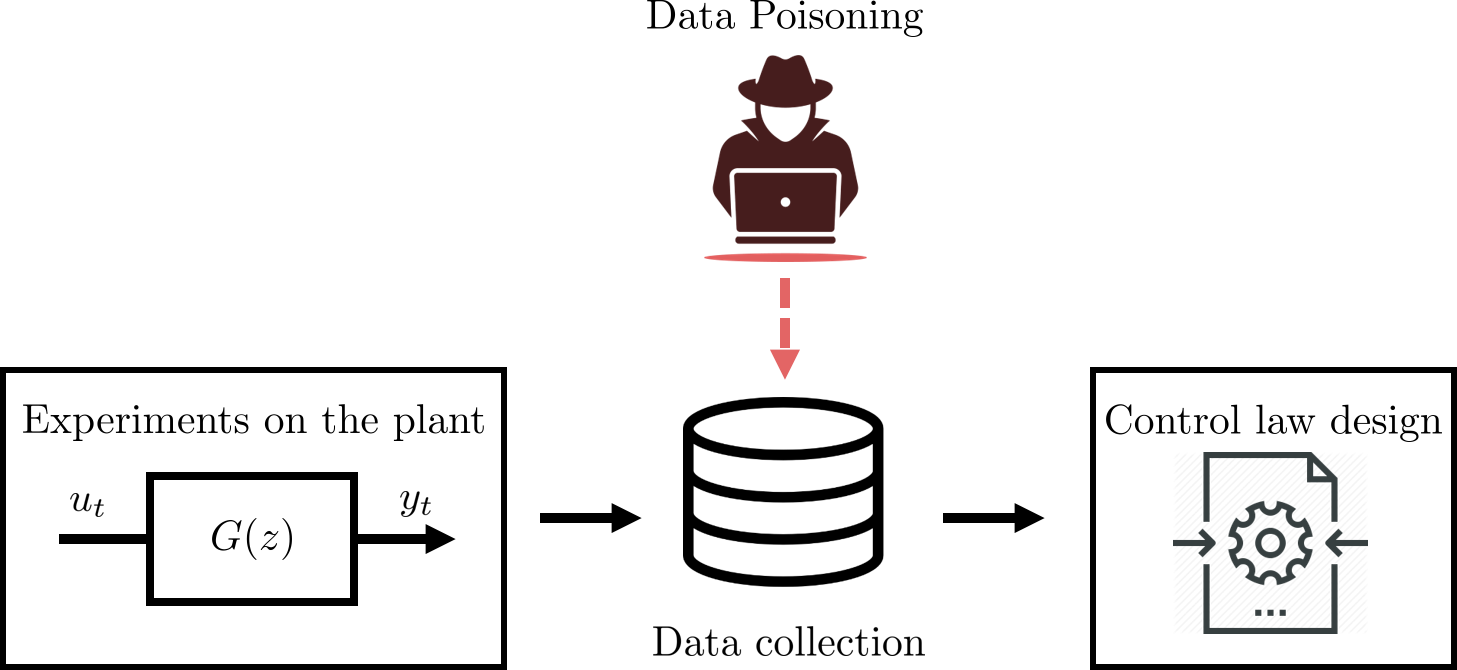}
\caption{Data poisoning scheme for data-driven methods.}
\label{fig:poisoning_scheme}
\end{figure}
We denote the malicious signal on the actuators by $\boldsymbol{a}_{u}\in \mathbb{R}^{N}$, and respectively $\boldsymbol{a}_{y}\in\mathbb{R}^{N}$, the attack signals on the sensors (note that we omitted the time-window subscript). We indicate the attack signal at time $t \in [N-1]$ by $a_{u,t}$ and  $a_{y,t}$. The new input and output data points in the dataset at time $t$ are respectively: $u_t'=u_t+a_{u,t}$  and   $y_t'=y_t+a_{y,t}$. Similarly, we indicate the attacked vector signals by $
\boldsymbol{u}'=\boldsymbol{u}+\boldsymbol{a}_u=\begin{bmatrix}u_0' & u_1' &\dots & u_{N-1}'\end{bmatrix}^{\top}$ and  $\boldsymbol{y}'=\boldsymbol{y}+\boldsymbol{a}_y=\begin{bmatrix}y_0' & y_1' &\dots & y_{N-1}'\end{bmatrix}^{\top}. $\\

\noindent
\textbf{Bi-level optimization problem.} Ideally, the malicious agent would like to implement the worst possible attack, solution of the following bi-level optimization problem:
\begin{equation}\label{eq:op_1}
\begin{aligned}
\max_{\boldsymbol{u}', \boldsymbol{y}'}\quad &\mathcal{A}(\set D_N,K(\boldsymbol{u}', \boldsymbol{y}'))& \\
\textrm{s.t.} \quad & K(\boldsymbol{u}', \boldsymbol{y}') \in \argmin_{K} \set L(\boldsymbol{u}', \boldsymbol{y}', K)\\
&\|\boldsymbol{u}'-\boldsymbol{u}\|_{q_u} \leq \delta_u,\quad \|\boldsymbol{y}'-\boldsymbol{y}\|_{q_y} \leq \delta_y,\\
& \|\boldsymbol{u}'-\boldsymbol{u}\|_0\leq \lfloor \rho_u N \rfloor, \quad \|\boldsymbol{y}'-\boldsymbol{y}\|_0\leq \lfloor \rho_y N \rfloor,\\
\end{aligned}
\end{equation}
where $\mathcal{A}(\set D_N, K(\boldsymbol{u}', \boldsymbol{y}'))$ denotes the objective criterion of the malicious agent and $q_u,q_y$ are convex norms. For example, a malicious agent may simply choose a \textit{max-min} type of attack, where $\set A=\set L$. Alternatively, she may choose to maximize the absolute value of the closed-loop eigenvalues or other performance metrics.

The {\it amplitude} of the attack is constrained by two elements: (i) the maximal fraction of actuators/sensors poisoned data, measured by $\boldsymbol{\rho}=(\rho_u, \rho_y)\in [0,1]^2$; (ii) the maximal amount of change $\boldsymbol{\delta}=(\delta_u,\delta_y)$ the malicious agent introduces in the dataset, where $\delta_u$ represents the amount of change in the residual of the control signal and respectively $\delta_y$ for the measurement signal. The constraint (i) is analogous to the assumption being used in classical machine Learning data-poisoning attacks \cite{biggio1,munoz2017towards,jagielski2018manipulating}, which is motivated by applications in which attackers can only reasonably control a small fraction of the transmitted data. 
The constraint (ii) is used to model the magnitude of the residuals, i.e., the amount of change in the dataset $\set D_N$ before and after the attack. 

  

Observe that in \ref{eq:op_1}, we have highlighted the dependency of the control law $K$ on the attack vector signals $\boldsymbol{a}_u,\boldsymbol{a}_y$. Furthermore, the malicious agent's objective $\set A$ does not, in general, directly depend on the attack vector signal since we want to evaluate the performance of the new controller $K(\boldsymbol{u}',\boldsymbol{y}')$ on a clean dataset.  Further note that the presence of the $0$-norm makes the problem non-convex. The latter can be cast to a mixed-integer program (MIP). Alternatively, one can relax the $0$-norm constraints using the $1$-norm.
\subsection{Computing the poisoning attack}
In general, computing the optimal attack signal is non trivial. When the inner problem is constrained, additional non-convexities are introduced in the bi-level optimization problem \cite{sinha2017review}. We may however look for local optima using a gradient ascent approach, as this is done in classical Data-Poisoning problems \cite{biggio1,munoz2017towards}. To that aim, we will focus our analysis in computing the gradients of $\set A$. 

\ifdefined\isextended
\begin{remark}\textit{There are also some major differences to classical data poisoning for classification problems: first,  there is no label for the data, which implies that we can not simply maximize the probability of classification error. Second, the problem involves two sets of data, the input data $\boldsymbol{u}$ and the output data $\boldsymbol{y}$. This makes the problem more complicated, since the optimal attack vector $\boldsymbol{a}_u$ (or $\boldsymbol{a}_y$) could depend on the attacked vector signals $\boldsymbol{y}', \boldsymbol{u}'$ in a complex way.}\\
\end{remark}\fi

\noindent
\textbf{Gradient computation.} In order to compute the gradient, one first needs to compute how the attack vector signals affect the control law. In case $K$ is parametrized by a vector $\theta \in \mathbb{R}^k$, since $\set A$ does not directly depend on $\boldsymbol{a}_u$ (observe that the adversary's objective function does not directly depend on the attack signal), one can derivate $\set A$ with respect to a vector signal (for example, $\boldsymbol{a}_u$) and obtain
\begin{equation}\label{eq:jacobian_J}
\nabla_{\boldsymbol{a}_u}\set A = (\nabla_{\boldsymbol{a}_u} \theta
)\nabla_{\theta} \set A  ,\end{equation}
where $\nabla_{\boldsymbol{a}_u} \theta$ is a $N\times n_k$ matrix and each row is the partial derivative of $\theta$ with respect to a specific input, i.e.,  $(\nabla_{\boldsymbol{a}_u} \theta)_{i,j} =  \partial_{a_{u,i}} \theta_j$. Furthermore, we can also easily compute second order terms (this would allow one to also  use  Newton methods). Denoting by $\partial_i$ the partial derivative with respect to $a_{u,i}$ (or $a_{y,i}$), we have
\begin{gather}\label{eq:hessian_J}
\begin{aligned}
(\nabla_{\boldsymbol{a}_u}^2\set A)_{i,j} \ &= \partial_i[\partial_j \set A]=\partial_i[(\partial_j \theta)^{\top} \nabla_\theta \set A],\\
&= (\partial_{i,j} \theta)^{\top} \nabla_\theta \set A + (\partial_j \theta)^{\top} \nabla_\theta^2 \set A (\partial_i \theta).
 \end{aligned}
 \end{gather}
We are still left with the problem of computing $\nabla_{\boldsymbol{a}_u}\theta$. This is one of the main issues, since computing how the control law is affected by the poisoning attack may be non-trivial. In case the inner problem $\set L$ is convex and sufficiently regular, then  it is possible to replace the inner problem with its stationary KKT conditions. This is also called \textit{Single-Level Reduction} \cite{sinha2017review}, and reduces the overall bi-level optimization problem to a single-level constrained problem by replacing the inner problem with $\nabla_{\theta}\set L(\boldsymbol{u}',\boldsymbol{y}', K_\theta)=0$. We then have the KKT conditions
\begin{equation}0=\frac{\textrm{d}}{\textrm{d}\boldsymbol{a}_u}\nabla_{\theta}\set L(\boldsymbol{u}',\boldsymbol{y}', K_\theta)=\nabla_{\boldsymbol{a}_u}\nabla_{\theta}\set L+\nabla_{\boldsymbol{a}_u}\theta \nabla_\theta^2 \set L.\end{equation}
In case  $\nabla_\theta^2 \set L$ is nonsingular, we can directly deduce that
\begin{equation}\label{eq:grad_theta_wrt_u}
\nabla_{\boldsymbol{a}_u}\theta=-(\nabla_{\boldsymbol{a}_u}\nabla_{\theta}\set L)(\nabla_\theta^2 \set L)^{-1}.\end{equation}
We can use the same reasoning to compute $\nabla_{\boldsymbol{a}_y}\theta$.
We have now all the ingredients to solve  problem \ref{eq:op_1} approximately. For example, one may use projected gradient ascent algorithms by iteratively updating the attack signals and project them back on the set of allowed perturbations $\set S_u \coloneqq \{\boldsymbol{x} \in \mathbb{R}^{N}: \|\boldsymbol{x}\|_{q_u} \leq \delta_u\}$ and $\set S_y \coloneqq \{\boldsymbol{x}\in \mathbb{R}^{N}: \|\boldsymbol{x}\|_{q_y} \leq \delta_y\}$.
\ifdefined\isextended \\ \fi
\begin{remark}\label{remark:norm_grad_a_theta} \textit{If $\|\nabla_{\boldsymbol{a}_u}\theta\|$ is sufficiently small, then it may be hard for the malicious agent to find an appropriate way to perturb the data using a gradient line search (since $\nabla_{\boldsymbol{a}_u}\set A$ would be small). One could enforce $\|\nabla_{\boldsymbol{a}_u}\theta\|$ to be small by making sure that  $\|\nabla_\theta^2 \set L\|$ is sufficiently big, which is related to the excitation persistence of the signals.}\end{remark}

Next we analyse how to attack a model-reference based data-driven control method, the Virtual Reference Feedback Tuning method. We leave the analysis of methods based on Willems et al. lemma to the technical report \cite{techreport}.

\section{Poisoning attacks on VRFT}
The Virtual Reference Feedback Tuning technique, together with the correlation approach, has been widely used in literature (and also on physical plants). It has been an active area of research for the last few decades \cite{campi2002virtual,formentin2012non,formentin2014comparison,hjalmarsson1998iterative,karimi2004iterative,karimi2007non,sala2005extensions,campestrini2011virtual,lequin2003iterative}. It shares similarities with the correlation approach, and therefore our analysis can be also applied to that method.  \ifdefined\isextended For simplicity, we will restrict our attention to the single-input/single-output case (a generalization of the VRFT method to MIMO systems can be found in \cite{formentin2012non}).\fi

We provide a generic analysis, and only introduce the malicious agent's objective at the end of the section. First, we explain how to compute $\nabla_{\boldsymbol{a}_u} \theta$ and $\nabla_{\boldsymbol{a}_y}\theta$. To that aim, we re-write the VRFT criterion in a convenient matrix form. Then, we provide a first set of analytical results quantifying the potential impact of attacks. Finally, we illustrate the analysis in the case where the malicious agent wishes to maximize the learner's loss, an attack referred to as {\it max-min} attack.

\subsection{Learner's loss}
We now introduce the attacked learner's cost criterion. As mentioned in Section 2, it is common practice to pre-filter the data using a filter $L(z)$. We assume for simplicity that the data has been already filtered, although it can be easily included in our analysis, and that Assumptions \ref{assumption1} and \ref{assumption2vrft_linear_control} hold. The cost function minimized by the learner is related to the $\ell_2$ norm of the control signal, specifically it is
\begin{align}\set L(\boldsymbol{u}',\boldsymbol{y}', K_\theta)) &= \frac{1}{N} \sum_{t=0}^{N-1} \|u_t' - K_\theta(z)\tilde e_t\|_2^2,\\
&= \frac{1}{N} \sum_{t=0}^{N-1} \|u_t+a_{u,t} - \beta(z)^{\top}\tilde e_t\theta\|_2^2,
\end{align}
where $\tilde e_t = r_t-y_t'=(M_r^{-1}(z)-1)(y_t+a_{y,t})$.
Now, rewriting the VRFT criterion in matrix form, we will be able to compute $\nabla_{\boldsymbol{a}_u}\theta$ and $\nabla_{\boldsymbol{a}_y}\theta$ using \ref{eq:grad_theta_wrt_u}. Let  $\phi_{t,i} = \beta_i(z) \tilde{e}_t$, $i=1,\dots, n_k$ and $\boldsymbol{\phi}_i =\begin{bmatrix}
\phi_{0,i},\dots,\phi_{N-1,i}
\end{bmatrix}^{\top}\in \mathbb{R}^N$. Then, it is possible to rewrite the VRFT criterion as
\begin{equation}\set L(\boldsymbol{u}',\boldsymbol{y}', K_\theta) = \frac{1}{N} \left \|\boldsymbol{u}'-\Phi(\boldsymbol{y}')\theta \right\|_2^2\end{equation}
where $\Phi(\boldsymbol{y}')=\begin{bmatrix}
\boldsymbol{\phi}_1,\dots, \boldsymbol{\phi}_{n_k}
\end{bmatrix} \in \mathbb{R}^{N\times n_k}$ is a matrix that containts the output response of the control law, and depends on $\boldsymbol{y}'$ since the error signal depends on $\boldsymbol{y}'$.
\subsection{Computing the gradients}
To be able to compute the gradients of the attack, we need one additional ingredient.  We denote the input-output response of a generic transfer function $G(z)$  over $[N]$ as $\boldsymbol{y}_{[N]}= \set T_{G,N}\boldsymbol{u}_{[N]} + \set O_{G,N} x_0$, where 
 \[\set T_{G,N} =\begin{bmatrix}
 D & 0 & 0 &\dots  & 0\\
 CB & D &0 & \dots & 0\\
 CAB & CB & D & \dots & 0\\
 \vdots & \vdots & \vdots &\ddots & \vdots\\
 CA^{N-1}B & CA^{N-2}B & CA^{N-3}B  &\dots & D
 \end{bmatrix}, \]
 represents the Toeplitz matrix of order $N+1$ of the system, and $\set O_{G,N} = \begin{bmatrix}
 C & CA & \dots & CA^{N}
 \end{bmatrix}^{\top}$ is the observability matrix of order $N+1$.  We can then derive the following lemma
\begin{lemma}
	Consider the VRFT criterion and assume without loss of generality that the relationship $r_t=M_r^{-1}(z)y_t$, holds with zero initial and final conditions. Then we have:
	\begin{align}
		\nabla_{\boldsymbol{a}_u}\theta &=  -\Phi(\boldsymbol{y}')\left(\Phi^{\top}(\boldsymbol{y}') \Phi(\boldsymbol{y}') \right)^{-1},\\
		\nabla_{\boldsymbol{a}_y}\theta &= -S\left(\Phi^{\top}(\boldsymbol{y}') \Phi(\boldsymbol{y}') \right)^{-1} ,
	\end{align}
with \[(S)_{i,*} = \boldsymbol{u}^{'\top}TD_{N,n_k}(e_i)-\theta^{\top}C_i\] where $C_i =\Phi^{\top}(\boldsymbol{y}')TD_{N,n_k}(e_i) + D_{N,n_k}^{\top}(e_i)^{\top}T^{\top} \Phi(\boldsymbol{y}')$. $T\in\mathbb{R}^{N\times Nn_k}$ is the overall Toeplitz matrix of the control signal (from the output signal)
\begin{equation}
T = \begin{bmatrix}
	\set T_{\beta_1}(\set T_{M_r^{-1}} - I_N) & \dots & \set T_{\beta_{n_k}}(\set T_{M_r^{-1}} - I_N)
\end{bmatrix}
\end{equation}
where $\set T_{M_r^{-1}}$ denotes the Toeplitz matrix of the system $r_t=M_r^{-1}(z)y_t$ and $D_{p,q}: \mathbb{R}^p \to \mathbb{R}^{pq\times q}$ is a generalized diagonalization operator 
\begin{equation}D_{p,q}(\boldsymbol{x}) = \begin{bmatrix}
		\boldsymbol{x} & \boldsymbol{0} & \dots & \boldsymbol{0}\\
		\boldsymbol{0} & \boldsymbol{x} & \dots & \boldsymbol{0}\\
		\vdots &\vdots &\ddots &\vdots\\
		\boldsymbol{0} & \boldsymbol{0} & \dots & \boldsymbol{x}
	\end{bmatrix},\quad \boldsymbol{0}=\begin{bmatrix} 0\\\vdots \\0\end{bmatrix}\in \mathbb{R}^p.\end{equation}
\\
\end{lemma}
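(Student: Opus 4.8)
The plan is to instantiate the single-level reduction formula \ref{eq:grad_theta_wrt_u} for the quadratic VRFT loss, for which the gradient and Hessian in $\theta$ are available in closed form. Writing $\set L = \frac{1}{N}\|\boldsymbol{u}' - \Phi(\boldsymbol{y}')\theta\|_2^2$, I first record $\nabla_\theta \set L = -\frac{2}{N}\Phi^\top(\boldsymbol{u}' - \Phi\theta)$ and $\nabla_\theta^2 \set L = \frac{2}{N}\Phi^\top\Phi$, so that the inner minimizer is the ordinary least-squares estimate $\theta = (\Phi^\top\Phi)^{-1}\Phi^\top\boldsymbol{u}'$ and $(\nabla_\theta^2\set L)^{-1} = \frac{N}{2}(\Phi^\top\Phi)^{-1}$. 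The regularity required to apply \ref{eq:grad_theta_wrt_u} is precisely invertibility of $\Phi^\top\Phi$, i.e. persistency of excitation of the (filtered) regressors.

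For $\nabla_{\boldsymbol{a}_u}\theta$ the computation is immediate: $\Phi(\boldsymbol{y}')$ does not depend on $\boldsymbol{a}_u$, and $\boldsymbol{u}' = \boldsymbol{u} + \boldsymbol{a}_u$ is linear in $\boldsymbol{a}_u$, so holding $\theta$ fixed gives $\nabla_{\boldsymbol{a}_u}\nabla_\theta\set L = -\frac{2}{N}\Phi$. Substituting this and $(\nabla_\theta^2\set L)^{-1}$ into \ref{eq:grad_theta_wrt_u} cancels the scalar factors and leaves the first identity.

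The substantive step is $\nabla_{\boldsymbol{a}_y}\theta$, where $\Phi(\boldsymbol{y}')$ itself varies with $\boldsymbol{a}_y$ while $\boldsymbol{u}'$ stays constant. First I would make the dependence explicit and linear: under the zero initial/final condition hypothesis the map $\boldsymbol{y}' \mapsto \tilde{\boldsymbol{e}}$ is exactly the Toeplitz action $(\set T_{M_r^{-1}} - I_N)\boldsymbol{y}'$ (the observability term $\set O x_0$ drops out), and filtering by each basis element $\beta_k(z)$ yields the $k$-th column $\Phi_{:,k} = \set T_{\beta_k}(\set T_{M_r^{-1}} - I_N)\boldsymbol{y}'$; stacking these blocks horizontally is precisely the matrix $T$. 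The key identity is then $\partial_{a_{y,i}}\Phi = T\,D_{N,n_k}(e_i)$: differentiating column $k$ with respect to $a_{y,i}$ selects the $i$-th column $\set T_{\beta_k}(\set T_{M_r^{-1}} - I_N)e_i$ of the $k$-th Toeplitz block, and the generalized diagonalization operator $D_{N,n_k}(e_i)$ is exactly the device that places $e_i$ in each block so that $T D_{N,n_k}(e_i)$ reassembles these columns into an $N\times n_k$ matrix. Differentiating $\nabla_\theta\set L = -\frac{2}{N}(\Phi^\top\boldsymbol{u}' - \Phi^\top\Phi\,\theta)$ row-wise with the product rule, with $\theta$ and $\boldsymbol{u}'$ held fixed, produces two contributions from $\partial_{a_{y,i}}(\Phi^\top\Phi)$ that collect into the symmetric matrix $C_i = \Phi^\top T D_{N,n_k}(e_i) + D_{N,n_k}(e_i)^\top T^\top \Phi$, giving the $i$-th row $(S)_{i,*} = \boldsymbol{u}'^\top T D_{N,n_k}(e_i) - \theta^\top C_i$. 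Hence $\nabla_{\boldsymbol{a}_y}\nabla_\theta\set L = -\frac{2}{N}S$, and a final application of \ref{eq:grad_theta_wrt_u} yields the second identity.

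The hard part is the bookkeeping in the $\boldsymbol{a}_y$ step: because $\Phi$ enters $\set L$ quadratically, the derivative $\partial_{a_{y,i}}\Phi$ is a third-order object, and the whole purpose of introducing $T$ and $D_{N,n_k}$ is to rewrite it as an ordinary matrix product so that the product rule on $\Phi^\top\Phi$ closes into the compact form $C_i$. I would be careful to verify that the zero initial/final condition assumption is exactly what legitimizes replacing the full input-output map $\set T_{M_r^{-1}}\boldsymbol{y}' + \set O x_0$ by the Toeplitz term alone, and to check the symmetry $C_i^\top = C_i$, which is what permits writing the two product-rule terms as the single row $(S)_{i,*}$.
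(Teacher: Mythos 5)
Your proposal is correct and follows essentially the same route as the paper's proof: single-level reduction via \ref{eq:grad_theta_wrt_u} with the closed-form gradient and Hessian of the quadratic VRFT loss, the observation that $\Phi(\boldsymbol{y}')$ is independent of $\boldsymbol{a}_u$ for the first identity, and the factorization $\Phi(\boldsymbol{y}') = T D_{N,n_k}(\boldsymbol{y}')$ yielding $\partial_{a_{y,i}}\Phi = T D_{N,n_k}(e_i)$ together with the product rule collecting into $C_i$ and $S$ for the second. Your explicit check that $C_i^\top = C_i$ is a careful touch that the paper leaves implicit.
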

\begin{proof} To find $\nabla_{\boldsymbol{a}_u}\theta$ and $\nabla_{\boldsymbol{a}_y}\theta$, we use \ref{eq:grad_theta_wrt_u}. We can easily find that $\nabla_\theta \set L= -\frac{2}{N}\Phi^{\top}(\boldsymbol{y}') (\boldsymbol{u}'-\Phi(\boldsymbol{y}') \theta)$ and $\nabla^2_\theta \set L=\frac{2}{N} \Phi^{\top}(\boldsymbol{y}')  \Phi(\boldsymbol{y}')$, with minimum of $\set L$ given by $\hat{\theta}(\boldsymbol{u}', \boldsymbol{y}')$:
\begin{equation}\label{eq:solution_theta}\hat{\theta}(\boldsymbol{u}', \boldsymbol{y}') = \left(\Phi^{\top}(\boldsymbol{y}') \Phi(\boldsymbol{y}') \right)^{-1}\Phi^{\top}(\boldsymbol{y}') \boldsymbol{u}'.\end{equation}
\textbf{Computation of $\nabla_{\boldsymbol{a}_u}\theta$. }
Since $\nabla_{\boldsymbol{a}_u}\nabla_\theta \set L= -\frac{2}{N} \Phi(\boldsymbol{y}')$, then, using $\nabla^2_\theta \set L=\frac{2}{N} \Phi^{\top}(\boldsymbol{y}')  \Phi(\boldsymbol{y}')$ we have
\begin{equation}\nabla_{\boldsymbol{a}_u} \theta=-\Phi(\boldsymbol{y}')\left(\Phi^\top(\boldsymbol{y}') \Phi(\boldsymbol{y}') \right)^{-1},\end{equation}
which is remarkably independent of $\boldsymbol{u}'$.  Therefore for a persistently exciting signal $\boldsymbol{y}$, we may expect $\nabla_{\boldsymbol{a}_u} \theta$ to be small in norm.
\\\\
\textbf{Computation of $\nabla_{\boldsymbol{a}_y}\theta$. } Computing $\nabla_{\boldsymbol{a}_y}\theta$ is more involved. Actually, we only need to compute the quantity $\nabla_{\boldsymbol{a}_y}\nabla_\theta \set L$. To do so, we need to understand how $\boldsymbol{\phi}_i$ is affected by the signal $\boldsymbol{y}'$.

The output $\boldsymbol{\phi}_i$ of the i-th controller can also be rewritten as the output response over $[N-1]$ of $\beta_i(z)$ given the input $\tilde{\boldsymbol{e}}$. As mentioned in Section 2, we can write 
$\boldsymbol{\phi}_i = \set T_{\beta_i,N}\tilde{\boldsymbol{e}} + \set O_{ \beta_i,N}\beta_{i,0}$ where $\tilde{\boldsymbol{e}} = \boldsymbol{r}-\boldsymbol{y}'$. The initial conditions will not affect the analysis, and they can be assumed to be 0. We will also avoid the subscript $N$ for simplicity. We can write $\Phi$ in the following manner
\begin{equation}\Phi = \begin{bmatrix}\set T_{\beta_1} (\boldsymbol{r}-\boldsymbol{y}')& \dots & \set T_{\beta_{n_k}} (\boldsymbol{r}-\boldsymbol{y}')\end{bmatrix}   \end{equation}
with
$\theta= \sum_{i=1}^{n_k} \set T_{\beta_i}\theta_i \tilde{\boldsymbol{e}}=\sum_{i=1}^{n_k} \set T_{\beta_i}\theta_i (\boldsymbol{r}-\boldsymbol{y}').$
We also need to make the relationship between $\boldsymbol{r}$ and $\boldsymbol{y}$ explicit. Since $r_t = (M_r^{-1}(z)-1)y_t$ we can study $M_r^{-1}(z)$. Suppose the state space realization of $M_r(z)$ is given by $(A,B,C,D)$: if $D$ is non singular or $F=CA^{-1}B$ is nonsingular, then one can use Theorem 1 in \cite{kavranoglu1993new} to invert the state space formulation and obtain a linear relationship of the type $\boldsymbol{r}=\set \set T_{M_r^{-1}} \boldsymbol{y}$, where $\set T_{M_r^{-1}}$ is the Toeplitz matrix of order $N$ of the inverted system (where we assumed zero initial/final conditions).
\ifdefined\isextended
For example, if $F$ is nonsingular then we can write the state space representation of $M_r^{-1}(z)$ as follows
\begin{align}
x_{t-1}=\bar A x_t + \bar B y_t,\quad r_t =  \bar C x_t + \bar D y_t
\end{align}
where
\begin{align}
\bar A &= A^{-1}(I-BF^{-1}CA^{-1}),\quad \bar C =-F^{-1}CA^{-1},\\
\bar B &= -A^{-1}BF^{-1},\quad \bar D = -F^{-1},
\end{align}
Assuming zero final conditions this allows one to write $\boldsymbol{r}=\set T_{M_r^{-1}} \boldsymbol{y}$, where $\set T_{M_r^{-1}}$ is
\begin{equation}
\set T_{M_r^{-1}} = \begin{bmatrix}
\bar D &\dots & \bar C\bar A^{N-3}\bar B & \bar C\bar A^{N-2}\bar B\\
\vdots  &\ddots &\vdots &\vdots \\
0 & \dots & \bar D & \bar C \bar B\\
0 & \dots & 0 & \bar D
\end{bmatrix}.
\end{equation}
\fi
In case $CA^{-1}B$ is singular, one can still try to express the reference signal in regressor form, and write $r_t = \rho_t^{\top} \boldsymbol{y}$ for some $\rho_t \in \mathbb{R}^N$, so that $\set T_{M_r^{-1}} = \begin{bmatrix}
\rho_0^{\top} & \dots &\rho_{N-1}^{\top}
\end{bmatrix}^{\top}.$ This allows us to express the virtual error vector as $\boldsymbol{e} = (\set T_{M_r^{-1}} - I ) \boldsymbol{y}$. If we now define the overall Toeplitz matrix from the output to the control signal $ T = \begin{bmatrix}
	\set T_{\beta_1}(\set T_{M_r^{-1}} - I) & \dots & \set T_{\beta_{n_k}}(\set T_{M_r^{-1}} - I)
\end{bmatrix}$, then we can write $\Phi(\boldsymbol y) = T D_{N,n_k}(\boldsymbol{y})$.
\ifdefined\isextended
 where $D$ is a generalized diagonalization operator $D_{p,q}: \mathbb{R}^p \to \mathbb{R}^{pq\times q}$
\begin{equation}D_{p,q}(\boldsymbol{x}) = \begin{bmatrix}
\boldsymbol{x} & \boldsymbol{0} & \dots & \boldsymbol{0}\\
\boldsymbol{0} & \boldsymbol{x} & \dots & \boldsymbol{0}\\
\vdots &\vdots &\ddots &\vdots\\
\boldsymbol{0} & \boldsymbol{0} & \dots & \boldsymbol{x}
\end{bmatrix},\quad \boldsymbol{0}=\begin{bmatrix} 0\\\vdots \\0\end{bmatrix}\in \mathbb{R}^p.\end{equation}
\else
\fi
We can find the derivative with respect to $y_j$: $\frac{\partial }{\partial y_j} \Phi(\boldsymbol y)=  T D(e_j)$.
  Similarly, we have $\frac{\partial }{\partial y_j} \Phi\theta= T D(e_j)\theta$. Let $C_j=\Phi^{\top}(\boldsymbol{y}')TD(e_j) + D^{\top}(e_j)^{\top}T^{\top} \Phi(\boldsymbol{y}')$, it follows that:
\begin{equation}
\begin{aligned}\Big(\nabla_{\boldsymbol{a}_y}\nabla_\theta \set L\Big)_{j,*}&= -\frac{2}{N} \Big[(\boldsymbol{u}'-\Phi(\boldsymbol{y}') \theta)^{\top}TD(e_j) \\
	&\quad\quad -(TD(e_j)\theta)^{\top}\Phi(\boldsymbol{y}')\Big],\\
&= -\frac{2}{N} (\boldsymbol{u}^{'\top}TD(e_j)-\theta^{\top}C_j).
\end{aligned}.
\end{equation}
\end{proof}
\ifdefined\isextended
\begin{remark}\textit{
In the previous lemma we made use of the assumption that we could write $\boldsymbol{r}=\set \set T_{M_r^{-1}} \boldsymbol{y}$, where $T_{M_r^{-1}} $ is a square matrix.  This assumption holds in case the initial/final conditions are $0$ for $r_t=M_r^{-1}(z)$. If that is not the case, then we need to augment the matrix $T_{M_r^{-1}}$ with additional columns, as many as needed in order to take into account the extra conditions. Obviously, we assume that the user also collected this extra data from experiments.}\ifdefined\isextended \\ \fi
\end{remark}
\fi

\subsection{Impact of poisoning attacks} 

Using the above analysis, we can quantify the potential impact of poisoning attacks. More precisely, we can upper bound the difference between $\theta = \hat{\theta}(\boldsymbol{u}, \boldsymbol{y})$ and $\theta' = \hat{\theta}(\boldsymbol{u}', \boldsymbol{y}')$ before and after the attack. 

\begin{lemma}\label{lemma:diff_bound}
Let $(\boldsymbol{a}_u, \boldsymbol{a}_y)$ be a generic data-poisoning attack, with constraints $\|\boldsymbol{a}_u\|\leq \delta_u$ and $\|\boldsymbol{a}_y\| \leq \delta_y$. Then, we have:
\begin{equation}\left \|\theta-\theta' \right\|_2 \leq \gamma \sqrt{n_k}(\|\boldsymbol{y}\|_2\delta_u + \|\boldsymbol{u}\|_2\delta_y),\end{equation}
where
$\gamma\coloneqq \sigmamax(T)/\sigmamin(\Phi(\boldsymbol{y})^{\top}\Phi(\boldsymbol{y})).$
\end{lemma}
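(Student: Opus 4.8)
The plan is to obtain the bound as a first-order sensitivity estimate built on the two Jacobians $\nabla_{\boldsymbol{a}_u}\theta$ and $\nabla_{\boldsymbol{a}_y}\theta$ computed in the preceding lemma. To first order, the change in the fitted parameter reads $\theta'-\theta=(\nabla_{\boldsymbol{a}_u}\theta)^{\top}\boldsymbol{a}_u+(\nabla_{\boldsymbol{a}_y}\theta)^{\top}\boldsymbol{a}_y$ evaluated at the clean data; this is exact in $\boldsymbol{a}_u$ (since $\hat\theta$ is linear in $\boldsymbol{u}'$ for fixed $\boldsymbol{y}'$) and a local approximation in $\boldsymbol{a}_y$. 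Applying the triangle inequality, submultiplicativity of the operator norm, and the amplitude constraints $\|\boldsymbol{a}_u\|_2\le\delta_u$, $\|\boldsymbol{a}_y\|_2\le\delta_y$ then gives $\|\theta-\theta'\|_2\le\|\nabla_{\boldsymbol{a}_u}\theta\|_2\,\delta_u+\|\nabla_{\boldsymbol{a}_y}\theta\|_2\,\delta_y$, so the whole task reduces to bounding the two Jacobian norms by $\gamma\sqrt{n_k}\|\boldsymbol{y}\|_2$ and $\gamma\sqrt{n_k}\|\boldsymbol{u}\|_2$.

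For the actuator Jacobian I would insert $\nabla_{\boldsymbol{a}_u}\theta=-\Phi(\boldsymbol{y})(\Phi^{\top}(\boldsymbol{y})\Phi(\boldsymbol{y}))^{-1}$ together with the factorization $\Phi(\boldsymbol{y})=TD_{N,n_k}(\boldsymbol{y})$ established above. The elementary fact driving everything is that the generalized diagonalization operator is an isometry up to scaling, $D_{N,n_k}(\boldsymbol{y})^{\top}D_{N,n_k}(\boldsymbol{y})=\|\boldsymbol{y}\|_2^2\,I_{n_k}$, so that $\sigmamax(D_{N,n_k}(\boldsymbol{y}))=\|\boldsymbol{y}\|_2$ while its Frobenius norm is $\sqrt{n_k}\,\|\boldsymbol{y}\|_2$. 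Submultiplicativity then yields $\|\nabla_{\boldsymbol{a}_u}\theta\|_2\le\|\Phi(\boldsymbol{y})\|_2\,\|(\Phi^{\top}\Phi)^{-1}\|_2\le\sigmamax(T)\sqrt{n_k}\|\boldsymbol{y}\|_2/\sigmamin(\Phi^{\top}(\boldsymbol{y})\Phi(\boldsymbol{y}))=\gamma\sqrt{n_k}\|\boldsymbol{y}\|_2$, the factor $\sqrt{n_k}$ being precisely the contribution of the Frobenius norm of $D_{N,n_k}(\boldsymbol{y})$.

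The sensor Jacobian is the crux. Since $\nabla_{\boldsymbol{a}_y}\theta=-S(\Phi^{\top}\Phi)^{-1}$, we have $\|\nabla_{\boldsymbol{a}_y}\theta\|_2\le\|S\|_2/\sigmamin(\Phi^{\top}(\boldsymbol{y})\Phi(\boldsymbol{y}))$, so everything rests on showing $\|S\|_2\le\sigmamax(T)\sqrt{n_k}\|\boldsymbol{u}\|_2$. I would first use the normal equations $\Phi^{\top}(\boldsymbol{u}-\Phi\theta)=0$ satisfied by $\theta=\hat\theta(\boldsymbol{u},\boldsymbol{y})$ to rewrite each row as $(S)_{i,*}=\boldsymbol{\epsilon}^{\top}TD_{N,n_k}(e_i)-\theta^{\top}D_{N,n_k}(e_i)^{\top}T^{\top}\Phi$, with least-squares residual $\boldsymbol{\epsilon}=\boldsymbol{u}-\Phi\theta$. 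Testing $S$ against an arbitrary $v\in\mathbb{R}^{n_k}$ and recollecting the index $i$, the two terms fold into the Toeplitz matrices $M_v=T(v\otimes I_N)$ and $M_\theta=T(\theta\otimes I_N)$, giving the clean identity $Sv=M_v^{\top}\boldsymbol{\epsilon}-M_\theta^{\top}\Phi v$. The Kronecker identity $\|v\otimes I_N\|_2=\|v\|_2$ extracts the factor $\sigmamax(T)$ from both $M_v$ and $M_\theta$, and $\|\boldsymbol{\epsilon}\|_2\le\|\boldsymbol{u}\|_2$ handles the first term since $\boldsymbol{\epsilon}$ is the orthogonal projection residual of $\boldsymbol{u}$.

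I expect the main obstacle to be controlling the second term $M_\theta^{\top}\Phi v$: this is where the bound is loosest, because bounding it naively produces $\|\theta\|_2\,\|\Phi(\boldsymbol{y})\|_2$, which must be traded back for $\|\boldsymbol{u}\|_2$ and the remaining power of $\sqrt{n_k}$ through the diagonalization operator and the least-squares relation $\theta=(\Phi^{\top}\Phi)^{-1}\Phi^{\top}\boldsymbol{u}$. Once $\|S\|_2\le\sigmamax(T)\sqrt{n_k}\|\boldsymbol{u}\|_2$ is secured, dividing by $\sigmamin(\Phi^{\top}(\boldsymbol{y})\Phi(\boldsymbol{y}))$ gives $\|\nabla_{\boldsymbol{a}_y}\theta\|_2\le\gamma\sqrt{n_k}\|\boldsymbol{u}\|_2$, and summing the two contributions yields the claimed inequality.
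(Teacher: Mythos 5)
Your proposal diverges from the paper's proof in a way that leaves a genuine gap: the lemma asserts an \emph{exact} inequality valid for every admissible attack, but your starting identity $\theta'-\theta=(\nabla_{\boldsymbol{a}_u}\theta)^{\top}\boldsymbol{a}_u+(\nabla_{\boldsymbol{a}_y}\theta)^{\top}\boldsymbol{a}_y$ is, as you yourself concede, only a first-order approximation in $\boldsymbol{a}_y$. The map $\boldsymbol{y}'\mapsto\hat\theta=(\Phi(\boldsymbol{y}')^{\top}\Phi(\boldsymbol{y}'))^{-1}\Phi(\boldsymbol{y}')^{\top}\boldsymbol{u}'$ is a genuinely nonlinear (rational) function, and there is also an uncontrolled cross term, since the exact coefficient of $\boldsymbol{a}_u$ is $-\Phi(\boldsymbol{y}')(\Phi(\boldsymbol{y}')^{\top}\Phi(\boldsymbol{y}'))^{-1}$ evaluated at the \emph{poisoned} output, not the clean one. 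Without an explicit bound on the Taylor remainder (which would introduce additional constants and would not reproduce the clean form $\gamma\sqrt{n_k}(\|\boldsymbol{y}\|_2\delta_u+\|\boldsymbol{u}\|_2\delta_y)$), the claimed inequality does not follow. The paper never linearizes: it exploits the exact linearity of $\boldsymbol{x}\mapsto\Phi(\boldsymbol{x})=TD_{N,n_k}(\boldsymbol{x})$ to write $\Phi(\boldsymbol{y}')=\Phi(\boldsymbol{y})+\Phi(\boldsymbol{a}_y)$, sets $P=\Phi(\boldsymbol{y})^{\top}\Phi(\boldsymbol{y})$ and $P'=P+\Lambda$, and uses the inversion identity $(P+\Lambda)^{-1}=P^{-1}-P^{-1}\Lambda(P+\Lambda)^{-1}$ to arrive at the closed-form difference $\theta-\theta'=-(P+\Lambda)^{-1}\bigl(\Phi(\boldsymbol{a}_y)^{\top}\boldsymbol{u}+\Phi(\boldsymbol{y}')^{\top}\boldsymbol{a}_u\bigr)$; the bound then follows from $\|\Phi(\boldsymbol{x})\|_2\le\sqrt{n_k}\,\sigmamax(T)\|\boldsymbol{x}\|_2$ and $\|(P+\Lambda)^{-1}\|_2\le 1/\sigmamin(P)$. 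In particular, both $\delta_u$ and $\delta_y$ fall out of one exact expression, and no bound on the sensor Jacobian $S$ from the preceding lemma is ever needed.

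Even granting your framework, the decisive step is missing and cannot be repaired along the lines you sketch. You correctly reduce everything to showing $\|S\|_2\le\sqrt{n_k}\,\sigmamax(T)\|\boldsymbol{u}\|_2$, and your identity $Sv=M_v^{\top}\boldsymbol{\epsilon}-M_\theta^{\top}\Phi v$ is fine, but the second term does not close: $\|M_\theta^{\top}\Phi v\|_2\le\sigmamax(T)\|\theta\|_2\|\Phi(\boldsymbol{y})\|_2\|v\|_2$, and substituting the least-squares relation gives $\|\theta\|_2\le\|\boldsymbol{u}\|_2/\sigmamin(\Phi(\boldsymbol{y}))$, hence a bound of the form $\sigmamax(T)\,\kappa(\Phi(\boldsymbol{y}))\,\|\boldsymbol{u}\|_2\|v\|_2$ with $\kappa(\Phi(\boldsymbol{y}))=\sigmamax(\Phi(\boldsymbol{y}))/\sigmamin(\Phi(\boldsymbol{y}))$. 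This condition-number factor is not bounded by $\sqrt{n_k}$ in general (it is typically much larger), so your route yields at best a strictly weaker inequality with a different constant, not the one stated. The combination of the uncontrolled linearization error and the unproven (and, by this route, unprovable) bound on $\|S\|_2$ means the proposal does not constitute a proof of the lemma.
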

\begin{proof}
For ease of exposition, let $\Phi=\Phi(\boldsymbol{y}), \tilde\Phi=\Phi(\boldsymbol{a}_y)$ and $\Phi'=\Phi(\boldsymbol{y}')=\Phi+\tilde\Phi$. Define also $P=\Phi^{\top}(\boldsymbol{y})\Phi(\boldsymbol{y})$ and $ P' =  \Phi(\boldsymbol{y}')^{\top}  \Phi(\boldsymbol{y}')$. Using the identity $(A+B)^{-1} = A^{-1}-A^{-1}B(A+B)^{-1}$ and by noting that $P'=P + \Lambda(\boldsymbol{y}')$ where $\Lambda = \Phi(\boldsymbol{a}_y)^{\top}  \Phi(\boldsymbol{a}_y)+ 2\Phi(\boldsymbol{a}_y)^{\top}  \Phi(\boldsymbol{y})$, we obtain
	$(P')^{-1} = P^{-1}-P^{-1}\Lambda (P')^{-1}$. This identity allows us to work out the difference between $\theta$ and $\theta'$:
	\begin{align*}
		\theta-\theta' &= P^{-1}\Phi^{\top} \boldsymbol{u} - (P')^{-1}\Phi^{'\top} \boldsymbol{u}'\\
				       &\stackrel{(a)}{=} P^{-1}[\Phi^{\top} \boldsymbol{u} - (I-\Lambda (P')^{-1})(\Phi+\tilde\Phi)^{\top} (\boldsymbol{u}+\boldsymbol{a}_u)]\\
				       &= P^{-1}[-(I-\Lambda (P+\Lambda)^{-1})(\tilde\Phi^{\top}\boldsymbol{u}+(\Phi')^{\top}\boldsymbol{a}_u)]\\
				       &\stackrel{(b)}{=} P^{-1}[-P(P+\Lambda)^{-1}(\tilde\Phi^{\top}\boldsymbol{u}+(\Phi')^{\top}\boldsymbol{a}_u)]\\
				       &=-(P+\Lambda)^{-1}(\tilde\Phi^{\top}\boldsymbol{u}+(\Phi')^{\top}\boldsymbol{a}_u)
	\end{align*}
	where in (a) we used the inverse matrix identity and in (b) we factored out $(P+\Lambda)^{-1}$. 	Now, observe that for a generic vector $\boldsymbol{x}$ we could write $\Phi(\boldsymbol{x}) = TD(\boldsymbol{x})$, from which follows that $\|D(\boldsymbol{x})\|_2^2 = n_k\|\boldsymbol{x}\|_2^2$ and as a consequence $\|\Phi(\boldsymbol{x})\|_2 \leq \sqrt{n_k} \|T\|_2\|\boldsymbol{x}\|_2$. Using the previous result on  $\|\theta-\theta'\|_2$ leads to the following upper bound
	\[\|\theta-\theta'\|_2 \leq \sqrt{n_k}\|(P+\Lambda)^{-1}\|_2 \|T\|_2 (\|\boldsymbol{u}\|_2\delta_y + \|\boldsymbol{y}\|_2\delta_u). \]
	 We  conclude by observing that combining $\sigmamax((P+\Lambda)^{-1}) = 1/\sigmamin(P+\Lambda)$ and $\sigmamin(P+\Lambda) \geq \sigmamin(P)+\sigmamin(\Lambda)$ yields  $\|(P+\Lambda)^{-1}\|_2 \leq \frac{1}{\sigmamin(P)+\sigmamin(\Lambda)}\leq \frac{1}{\sigmamin(P)}.$
\end{proof}

What Lemma \oldref{lemma:diff_bound} tells us is that for a small value of $\gamma$, the difference in the two parameters will be small. This is in line with Remark \oldref{remark:norm_grad_a_theta}: if the signals are sufficiently exciting, then the minimum singular value of $P$ will be large, from which follows that $\gamma$ will be small. As a result,  the malicious agent, to be able to affect the parameter $\theta$, will have to use signals of larger magnitude, which are easier to detect. Moreover, the difference can also be minimized by reducing $n_k$ (the number of parameters). The next result, whose proof can be found in our technical report \cite{techreport}, confirms the importance of using persistently exciting signals.
\begin{proposition}\label{corollary:targeted_theta}
If for a fixed $\boldsymbol{y}'$ the condition $\rank(\Phi(\boldsymbol{y}')) \geq n_k$ holds, then for $\delta_u$ sufficiently large the malicious agent can choose any parameter $\theta \in \mathbb{R}^{n_k}$ by selecting a proper signal $\boldsymbol{a}_u$.
\end{proposition}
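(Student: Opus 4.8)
The plan is to read off the learner's behavior from the closed-form minimizer \ref{eq:solution_theta} and observe that, once $\boldsymbol{y}'$ is frozen, the learned parameter is an \emph{affine} function of the poisoned input. Write $\Phi\coloneqq\Phi(\boldsymbol{y}')\in\mathbb{R}^{N\times n_k}$. Since a matrix with $n_k$ columns has rank at most $n_k$, the hypothesis $\rank(\Phi)\geq n_k$ in fact forces full column rank, so $\Phi^{\top}\Phi$ is invertible and the left inverse $\Phi^{+}\coloneqq(\Phi^{\top}\Phi)^{-1}\Phi^{\top}$ is well defined. Because $\Phi$ is built from the virtual error and therefore depends only on $\boldsymbol{y}'$ (the same independence exploited when computing $\nabla_{\boldsymbol{a}_u}\theta$ above), substituting $\boldsymbol{u}'=\boldsymbol{u}+\boldsymbol{a}_u$ into \ref{eq:solution_theta} gives
\begin{equation}
\hat\theta(\boldsymbol{u}+\boldsymbol{a}_u,\boldsymbol{y}')=\Phi^{+}\boldsymbol{u}+\Phi^{+}\boldsymbol{a}_u,
\end{equation}
a genuinely affine map of $\boldsymbol{a}_u$ rather than an implicit one.

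Next I would invoke surjectivity of this map. Full column rank gives $\Phi^{+}\Phi=I_{n_k}$, so $\Phi^{+}:\mathbb{R}^{N}\to\mathbb{R}^{n_k}$ has full row rank and is onto. Hence for any target $\theta\in\mathbb{R}^{n_k}$ the linear equation $\Phi^{+}\boldsymbol{a}_u=\theta-\Phi^{+}\boldsymbol{u}$ is solvable, and a convenient explicit solution is $\boldsymbol{a}_u=\Phi\theta-\boldsymbol{u}$: indeed $\boldsymbol{u}'=\boldsymbol{u}+\boldsymbol{a}_u=\Phi\theta$, and therefore $\hat\theta(\boldsymbol{u}',\boldsymbol{y}')=\Phi^{+}\Phi\theta=\theta$. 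This shows the attacker can steer the learned controller to any desired parameter.

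It then remains to account for the amplitude budget. The constructed attack has finite magnitude $\|\boldsymbol{a}_u\|_{q_u}=\|\Phi\theta-\boldsymbol{u}\|_{q_u}$, a quantity fixed by the data and the target, so it satisfies $\|\boldsymbol{a}_u\|_{q_u}\leq\delta_u$ precisely when $\delta_u\geq\|\Phi(\boldsymbol{y}')\theta-\boldsymbol{u}\|_{q_u}$; this is the meaning of ``$\delta_u$ sufficiently large'', and as $\delta_u\to\infty$ the reachable set of learned parameters exhausts $\mathbb{R}^{n_k}$.

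I do not expect a genuine analytic obstacle here: the core is the one-line fact that a full-column-rank $\Phi$ has a surjective left inverse. The only points demanding care are conceptual. First, one must justify that $\Phi(\boldsymbol{y}')$ is truly independent of $\boldsymbol{a}_u$ so that $\hat\theta$ is affine, which follows from the structure established in the preceding lemma. Second, the threshold on $\delta_u$ is target-dependent, so no single finite budget works uniformly over all $\theta$; the statement should be read pointwise in $\theta$. Finally, the construction perturbs every sample of $\boldsymbol{u}$, so it tacitly assumes no sparsity restriction on $\boldsymbol{a}_u$ (i.e.\ $\rho_u=1$); under a nontrivial $0$-norm budget the set of achievable parameters would in general be a strict subset of $\mathbb{R}^{n_k}$.
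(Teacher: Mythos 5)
Your proof is correct and follows essentially the same route as the paper's: full column rank of $\Phi(\boldsymbol{y}')$ gives invertibility of $\Phi^{\top}\Phi$ and surjectivity of the resulting affine map $\boldsymbol{a}_u \mapsto \hat\theta$, the explicit attack $\boldsymbol{a}_u = \Phi(\boldsymbol{y}')\theta - \boldsymbol{u}$ coincides with the paper's (the paper writes it as $\Phi^{\dagger}(\Phi^{\top}\Phi)\theta - \boldsymbol{u}$, which simplifies to the same vector), and both arguments conclude with a target-dependent threshold $\varepsilon$ on $\delta_u$. Your added remarks on the pointwise-in-$\theta$ reading of the budget and the implicit $\rho_u = 1$ assumption are sensible clarifications but do not change the substance.
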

\ifdefined\isextended
\begin{proof} 
The proof follows from the fact that if $\rank(\Phi(\boldsymbol{y}')) \geq n_k$ holds then $\Phi(\boldsymbol{y}')^{\top}\Phi(\boldsymbol{y}')$ is full rank, and $\rank((\Phi(\boldsymbol{y}')^{\top}\Phi(\boldsymbol{y}'))^{-1}\Phi(\boldsymbol{y}')^{\top})=\rank(\Phi(\boldsymbol{y}')^{\top}) \geq n_k$.  Then, for any $x \in \mathbb{R}^{n_k}$ there exists $\boldsymbol{a}_u \in \mathbb{R}^{N}$ such that $x =(\Phi(\boldsymbol{y}')^{\top}\Phi(\boldsymbol{y}'))^{-1}\Phi(\boldsymbol{y}')^{\top}(\boldsymbol{u}+\boldsymbol{a}_u)$, i.e., $\boldsymbol{a}_u = \Phi(\boldsymbol{y}')^\dagger (\Phi(\boldsymbol{y}')^{\top}\Phi(\boldsymbol{y}'))x-\boldsymbol{u}$. Therefore, for any $x \in \mathbb{R}^{n_k}$ there exists $\varepsilon>0$ such that  for $\delta_u\geq \varepsilon$ the malicious agent can craft an attack that will make the parameter vector $\hat{\theta}(\boldsymbol{u}',\boldsymbol{y}')$ equal to $x$.
\end{proof}\fi

The previous proposition indicates that using a well-crafted attack, the malicious agent can dictate the closed-loop behavior of the system. For example, she may want $\hat \theta$ to converge to some chosen parameter $\theta_a$. The malicious agent in this case may not even need to change $\boldsymbol{y}$, and can just  choose an appropriate vector $\boldsymbol{a}_u$. Since $\hat{\theta}(\boldsymbol{u}', \boldsymbol{y}') = \left(\Phi^{\top}(\boldsymbol{y}') \Phi(\boldsymbol{y}') \right)^{-1}\Phi^{\top}(\boldsymbol{y}') \boldsymbol{u}'$, if the malicious agent wants $\hat{\theta}$ to converge to $\theta_a$, then the lower is the norm of $\left(\Phi^{\top}(\boldsymbol{y}') \Phi(\boldsymbol{y}') \right)^{-1}$ the higher needs to be $\delta_u$. It is well-known that for persistently exciting signals, this norm will be small, which in turn implies that $\delta_u$ needs to be bigger. 
\ifdefined\isextended
\begin{remark}\textit{If one wishes to include the effect of the filter $L(z)$ in the analysis, then we just need to include the effect of $\set T _L$ (the Toeplitz matrix of order $N$ of $L(z)$) in $T=\begin{bmatrix}
	\set T_{\beta_1}\set T_{L}(\set T_{M_r^{-1}}-I)&\dots &\set T_{\beta_{n_k}}\set T_{L}(\set T_{M_r^{-1}}-I)
	\end{bmatrix}$.
} 
\end{remark}
\fi


\subsection{Max-min attack}

In a max-min attack, the malicious agent aims at maximizing the learner's loss. Formally, we just have: $\set A= \set L$. By choosing this cost function, the malicious agent is implicitly maximizing the residual error $\|M_r(z)-[(I-M_r)GK_\theta](z)\|_2$ (as $N\to \infty$). This cost function may seem attractive, but it does not allow the malicious agent to explicitly control the dynamics of the closed-loop system. For example, the resulting closed-loop system may still be stable after the poisoning attack. Under the assumption that the malicious agent can affect the entire dataset, the optimization problem (to devise an optimal attack) is as follows:
\begin{equation}\label{eq:op_maxmin}
\begin{aligned}
\max_{\boldsymbol{u}' \boldsymbol{y}'}\quad &\set A(\boldsymbol{u},\boldsymbol{y}, \hat \theta(\boldsymbol{u}', \boldsymbol{y}'))\coloneqq \frac{1}{N}\left\|\boldsymbol{u}-\Phi(\boldsymbol{y})\hat \theta(\boldsymbol{u}', \boldsymbol{y}')\right\|_2^2& \\
\textrm{s.t.} \quad & \hat{\theta}(\boldsymbol{u}', \boldsymbol{y}') = \left(\Phi^{\top}(\boldsymbol{y}') \Phi(\boldsymbol{y}') \right)^{-1}\Phi^{\top}(\boldsymbol{y}') \boldsymbol{u}'\\
&\|\boldsymbol{u}'-\boldsymbol{u}\|_{q_u} \leq \delta_u,\quad \|\boldsymbol{y}'-\boldsymbol{y}\|_{q_y} \leq \delta_y.
\end{aligned}
\end{equation}

\noindent
\textbf{Convexity.} The above max-min optimization problem enjoys the following nice property. The objective function is convex in $\boldsymbol{u}'$: this is due to \ref{eq:solution_theta} providing the expression of $\hat{\theta}$. Therefore, for a fixed vector $\boldsymbol{y}'$, the maximum over $\boldsymbol{u}'$ is attained on some extremal point of the feasible set, specifically for $\|\boldsymbol{u'}-\boldsymbol{u}\|_{q_u}=\delta_u$. To find the optimal attack vector on the input signal, one can use disciplined convex-concave programming (DCCP) \cite{shen2016disciplined}. Convexity with respect to $\boldsymbol{y}'$ does not hold, which can be easily verified for the simple case $n_k=1$ and $N=2$. Note however that we can easily compute $\nabla_{\boldsymbol{a}_u} \set A$ and $\nabla_{\boldsymbol{a}_y} \set A$ using the expressions found previously (and also second-order terms).

\begin{algorithm}[h]
  \DontPrintSemicolon
  \KwIn{Dataset $\set D_N=(\boldsymbol{u},\boldsymbol{y})$; objective function $\set A$; parameters $\boldsymbol{\delta}$, $\eta$}
  \KwOut{Attack vectors $\boldsymbol{a}_u, \boldsymbol{a}_y$}
  $i \gets 0, (\boldsymbol{a}_u^{(i)},\boldsymbol{a}_y^{(i)})\gets (\boldsymbol{0},\boldsymbol{0})$ \Comment*[r]{Initialize algorithm}
  $\hat\theta^{(i)}\gets \hat{\theta}( \boldsymbol{u}+\boldsymbol{a}_u^{(i)}, \boldsymbol{y}+\boldsymbol{a}_y^{(i)})$ where $\hat\theta$ is given in \ref{eq:op_maxmin}\;
  $J^{(i)} \gets \set A(\boldsymbol{u},\boldsymbol{y}, \hat\theta^{(i)})$\;
  \Do{$|J^{(i+1)}-J^{(i)}|>\eta$}{
		 $\boldsymbol{a}_u^{(i+1)} \gets $ solve \ref{eq:op_maxmin} in $\boldsymbol{a}_u$ using DCCP \cite{shen2016disciplined}\;
		 $\boldsymbol{a}_y^{(i+1)} \gets \textsc{PGA}(\delta_y, \gamma_i,\hat{\theta}( \boldsymbol{u}+\boldsymbol{a}_u^{(i+1)}, \boldsymbol{y}+\boldsymbol{a}_y^{(i)}))$\; 
	
        $\hat\theta^{(i+1)}\gets \hat{\theta}( \boldsymbol{u}+\boldsymbol{a}_u^{(i+1)}, \boldsymbol{y}+\boldsymbol{a}_y^{(i+1)})$\;
        $J^{(i+1)}\gets \set A(\boldsymbol{u},\boldsymbol{y}, \hat\theta^{(i+1)})$\;
        $i \gets i+1$\;
      }
  \Return $(\boldsymbol{a}_u^{(i)}, \boldsymbol{a}_y^{(i)})$
  \caption{Max-min attack algorithm}
  \label{algo1}
\end{algorithm}

\noindent
\textbf{Upper bound on $\set A$. } Using Lemma \oldref{lemma:diff_bound}, we can derive an upper bound on $\set A$, which can help to quantify the maximal impact of max-min attacks.
\begin{corollary}\label{corollary:minmax_bound}
The objective function of the optimization problem \ref{eq:op_maxmin} satisfies:
\begin{align*}
\set A(\boldsymbol{u},\boldsymbol{y}, \hat \theta(\boldsymbol{u}', &\boldsymbol{y}')) \leq \Bigg(\sqrt{ \set L(\boldsymbol{u},\boldsymbol{y}, K_{\hat \theta(\boldsymbol{u}, \boldsymbol{y})})}\\ &+   \frac{n_k\|\boldsymbol{y}\|_2 \sigmamax^2(T)}{\sigmamin(P)} (\|\boldsymbol{y}\|_2\delta_u + \|\boldsymbol{u}\|_2\delta_y)\Bigg )^2.
\end{align*} 
\end{corollary}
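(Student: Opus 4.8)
The plan is to reduce the corollary to Lemma \oldref{lemma:diff_bound} through a single triangle-inequality decomposition of the attacker's objective, treating $\set A$ as a Lipschitz-in-$\theta$ quantity. Write $\theta = \hat\theta(\boldsymbol{u},\boldsymbol{y})$ for the clean-data minimizer and $\theta' = \hat\theta(\boldsymbol{u}',\boldsymbol{y}')$ for the poisoned one. Since the objective in \ref{eq:op_maxmin} evaluates the \emph{poisoned} parameter on the \emph{clean} regressor, we have $\sqrt{\set A(\boldsymbol{u},\boldsymbol{y},\theta')} = \frac{1}{\sqrt N}\|\boldsymbol{u}-\Phi(\boldsymbol{y})\theta'\|_2$. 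Adding and subtracting $\Phi(\boldsymbol{y})\theta$ inside the norm and applying the triangle inequality splits this into $\frac{1}{\sqrt N}\|\boldsymbol{u}-\Phi(\boldsymbol{y})\theta\|_2 + \frac{1}{\sqrt N}\|\Phi(\boldsymbol{y})(\theta-\theta')\|_2$. The first term is, by definition, exactly $\sqrt{\set L(\boldsymbol{u},\boldsymbol{y},K_{\theta})}$ — the residual of the clean loss at its own optimum — which is the first summand of the claimed bound.

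Next I would bound the second, perturbation, term. By submultiplicativity of the spectral norm, $\|\Phi(\boldsymbol{y})(\theta-\theta')\|_2 \le \|\Phi(\boldsymbol{y})\|_2\,\|\theta-\theta'\|_2$. For the first factor I reuse the operator-norm estimate already established inside the proof of Lemma \oldref{lemma:diff_bound}, namely $\|\Phi(\boldsymbol{x})\|_2 \le \sqrt{n_k}\,\sigmamax(T)\|\boldsymbol{x}\|_2$, applied with $\boldsymbol{x}=\boldsymbol{y}$; for the second factor I invoke Lemma \oldref{lemma:diff_bound} itself, which gives $\|\theta-\theta'\|_2 \le \frac{\sigmamax(T)}{\sigmamin(P)}\sqrt{n_k}\,(\|\boldsymbol{y}\|_2\delta_u+\|\boldsymbol{u}\|_2\delta_y)$. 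Multiplying the two bounds, the two $\sqrt{n_k}$ factors combine into $n_k$ and the two copies of $\sigmamax(T)$ into $\sigmamax^2(T)$, producing the second summand up to the leading $\frac{1}{\sqrt N}$ prefactor.

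Finally I would discard the normalization by noting $\frac{1}{\sqrt N}\le 1$ for $N\ge 1$, so the perturbation term is bounded by $\frac{n_k\|\boldsymbol{y}\|_2\,\sigmamax^2(T)}{\sigmamin(P)}(\|\boldsymbol{y}\|_2\delta_u+\|\boldsymbol{u}\|_2\delta_y)$; squaring the resulting bound on $\sqrt{\set A}$ then yields the stated inequality. The only delicate point — and the step most prone to slip — is the constant bookkeeping: tracking the powers of $n_k$ and $\sigmamax(T)$ through the product, and handling the $1/N$ normalization consistently. Because $\set A$ and $\set L$ carry the same $1/N$ prefactor, the $1/\sqrt N$ legitimately stays inside $\sqrt{\set L}$ for the first term while being dropped from the second via the crude bound $1/\sqrt N\le 1$. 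No genuine analytic obstacle remains once Lemma \oldref{lemma:diff_bound} is available; the corollary is essentially that perturbation bound chained with the Lipschitz dependence of the residual on $\theta$.
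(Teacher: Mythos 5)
Your proof is correct and follows essentially the same route as the paper's: a triangle-inequality split of $\|\boldsymbol{u}-\Phi(\boldsymbol{y})\hat\theta(\boldsymbol{u}',\boldsymbol{y}')\|_2$ around $\Phi(\boldsymbol{y})\hat\theta(\boldsymbol{u},\boldsymbol{y})$, the operator-norm bound $\|\Phi(\boldsymbol{y})\|_2\le\sqrt{n_k}\,\sigmamax(T)\|\boldsymbol{y}\|_2$, and Lemma \oldref{lemma:diff_bound} for $\|\theta-\theta'\|_2$. If anything, your bookkeeping is more careful than the paper's, which silently drops the $1/N$ normalization that you handle explicitly via $1/\sqrt{N}\le 1$.
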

\begin{proof}
\ifdefined\isextended
A simple use of triangular inequality allows us to deduce the following
\begin{align*}
\set A(\boldsymbol{u},\boldsymbol{y}, \hat \theta(\boldsymbol{u}', \boldsymbol{y}'))  &= \left\|\boldsymbol{u}-\Phi(\boldsymbol{y})\hat \theta(\boldsymbol{u}', \boldsymbol{y}')\right\|_2^2,\\
&\leq \left(\left\|\boldsymbol{u}-\Phi(\boldsymbol{y})\hat \theta(\boldsymbol{u}, \boldsymbol{y})\right\|_2 +  \left\|\Phi(\boldsymbol{y})\Delta\theta\right\|_2\right)^2
\end{align*}
where $\Delta\theta = \hat \theta(\boldsymbol{u}', \boldsymbol{y}')-\hat \theta(\boldsymbol{u}, \boldsymbol{y})$. By lemma \ref{lemma:diff_bound} and $\|\phi(\boldsymbol{y})\|_2 \leq \sqrt{n_k}\sigmamax(T) \|\boldsymbol{y}\|_2$ we get
\[\set A\leq \left(\sqrt{ \set L} + n_k\|\boldsymbol{y}\|_2  \frac{\sigmamax^2(T)}{\sigmamin(P)} (\|\boldsymbol{y}\|_2\delta_u + \|\boldsymbol{u}\|_2\delta_y)\right )^2 .\]
\else
The result directly follows from combining the triangular inequality and Lemma  \oldref{lemma:diff_bound}.
\fi
\end{proof}

\noindent
\textbf{Algorithm. }
In Algorithm \ref{algo1}, we propose an alternating gradient ascent algorithm to compute the attack. Since we can compute the optimal attack on the input signal using DCCP \cite{shen2016disciplined}, the idea is to first compute $\boldsymbol{a}_u$ (keeping $\boldsymbol{a}_y$ fixed) and then compute $\boldsymbol{a}_y$ according to $\boldsymbol{a}_u$, using a projected gradient ascent (PGA) search, where we iteratively update $\boldsymbol{a}_y$ along $\nabla_{\boldsymbol{a}_y}\set A$ and project onto $S_y$. At the end of each iteration $k$, we compute the difference $\Delta_i J\coloneqq  J^{(i+1)} - J^{(i)}$, where $J^{(i)}=\set A(\set D_N, \hat{\theta}(\boldsymbol{u}+\boldsymbol{a}_u^{(i)},\boldsymbol{y}+\boldsymbol{a}_y^{(i)} ))$, and we stop if $|\Delta_i J| <\eta$, with $\eta>0$ being a user-chosen parameter. In practice, the difficulty lies in computing the PGA step (it requires some fine tuning to adjust the step size).

\section{Numerical simulations}
\textbf{Plant dynamics and VRFT method.} We consider the numerical example proposed in \cite{campi2002virtual}  of a 
flexible transmission system,  which was originally proposed in \cite{landau1995flexible} as a benchmark for digital control design. The continuous plant is discretized with sampling time $T_s=0.05s$, and the dynamics are given by
$G(z)=B(z)/A(z)$
with $A(z)=z^4-1.41833z^{3}+1.58939z^{2}-1.31608z+0.88642$ and $B(z)=0.28261z+0.50666$. The reference model is $M_r(z)= (1-\alpha)^2/z(z-\alpha)^2$ with $\alpha=e^{-T_s\bar \omega},\bar \omega=10. $
For such system, the class of controllers considered is  of the form $\beta_i(z)=z^{2-i}/(z-1), i=1,\dots,6$. We will consider two types of input signals: (A)  a step function $u_t$ that is equal to $1$ for $t\in [5,15]$  and $0$ otherwise; (B) a persistently exciting input signal $u_t \sim \mathcal{N}(0,1)$.  In both cases, $N=512$ data points are collected from the plant and the data has been low pass filtered using $L(z)=(1-M_r(z))M_r(z)$ as in \cite{campi2002virtual}.

\medskip
\noindent
\textbf{Attack setting.}
We analyze the attack objective \ref{eq:op_maxmin} for different values of $\delta_u$ and $\delta_y$ using the euclidean norm. We used $\delta_u=\varepsilon_u\|\boldsymbol{u}\|_2$ with $\varepsilon_u \in \{0, 0.1, 0.2\}$, and $\delta_y = \varepsilon_y \|\boldsymbol{y}\|_2$ with $\varepsilon_y \in [0.01, 0.057]$. These values are chosen based on empirical evidence (we  observed that a slight increase of $\varepsilon_y$ brings a larger change in the closed-loop performance of the system).
\renewcommand{\thefigure}{2}
\begin{figure}[b]
	\centering
	\includegraphics[width=0.5\textwidth]{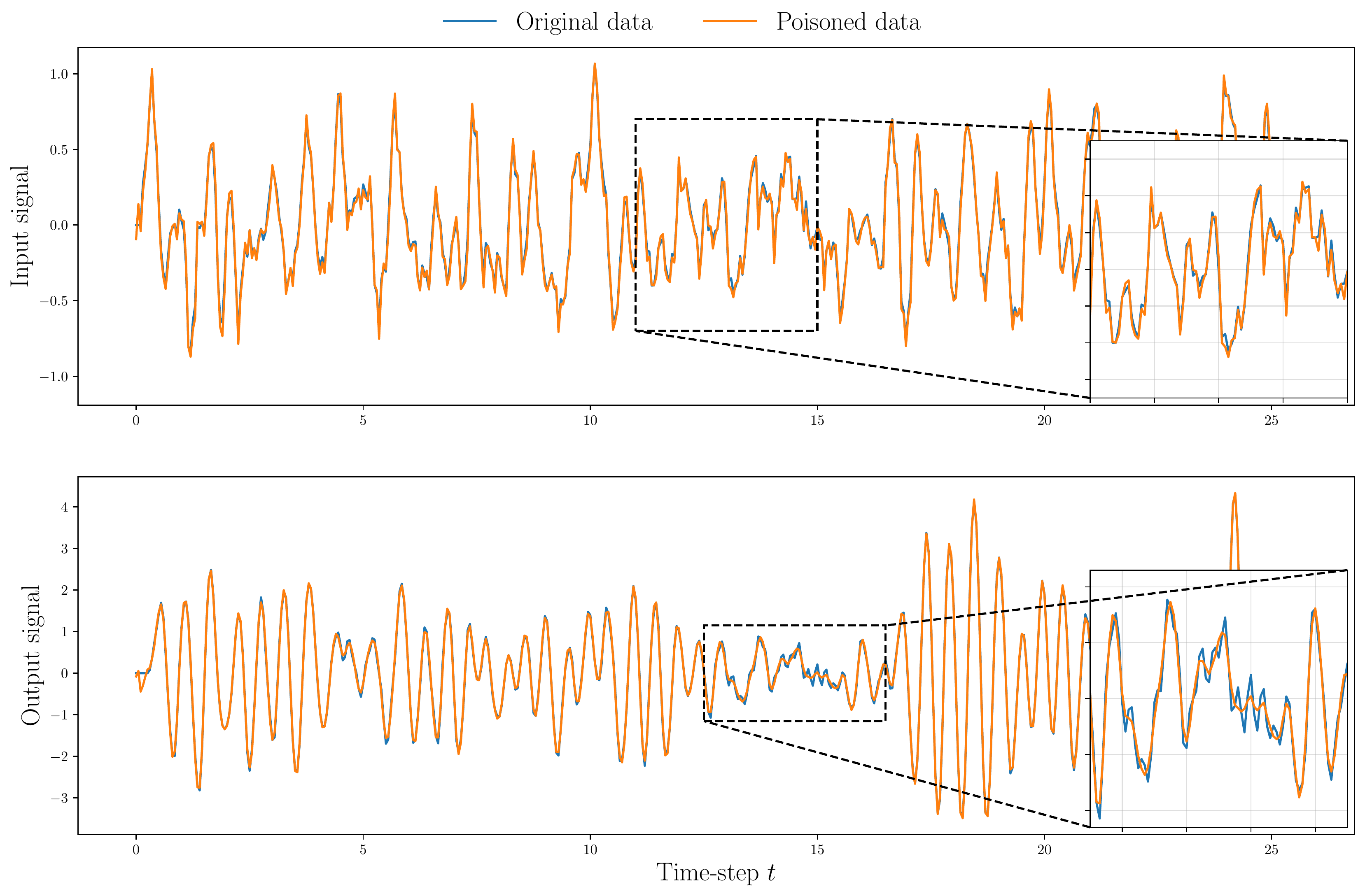}
	\caption{Input/output dataset $\set D_N$ for scenario (B) with $\varepsilon_u=0.1$ and $\varepsilon_y=0.057$.}
	\label{fig:input_output_data_whitenoise}
\end{figure}
\renewcommand{\thefigure}{3}
\begin{figure*}[]
	\centering
	\begin{subfigure}{0.49\linewidth}
		\includegraphics[width=\textwidth]{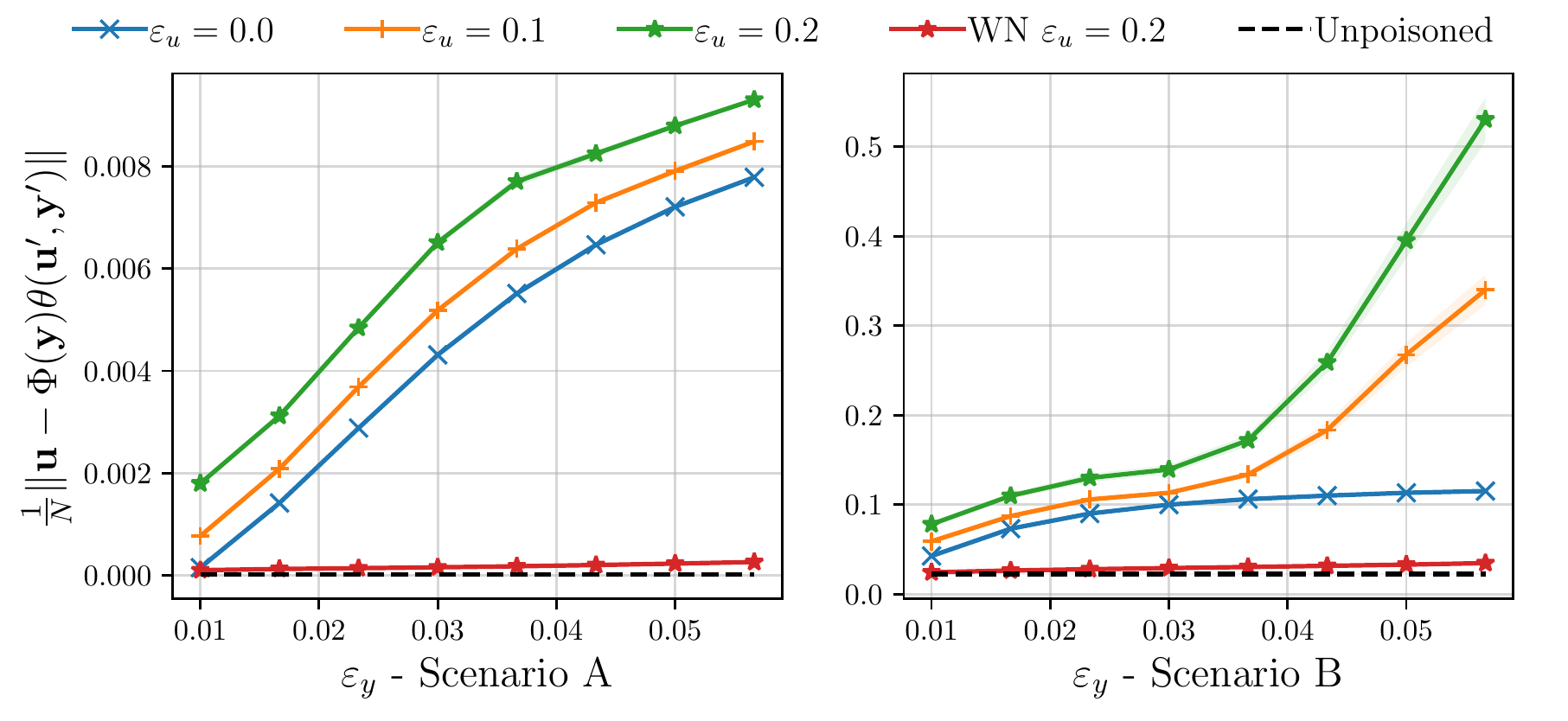}
		\caption{Fig.3: Loss of the learner $\set L$ for different values of $\varepsilon_u,\varepsilon_y$.}
		\label{fig:loss}
	\end{subfigure}
	\begin{subfigure}{0.49\linewidth}
		\includegraphics[width=\textwidth]{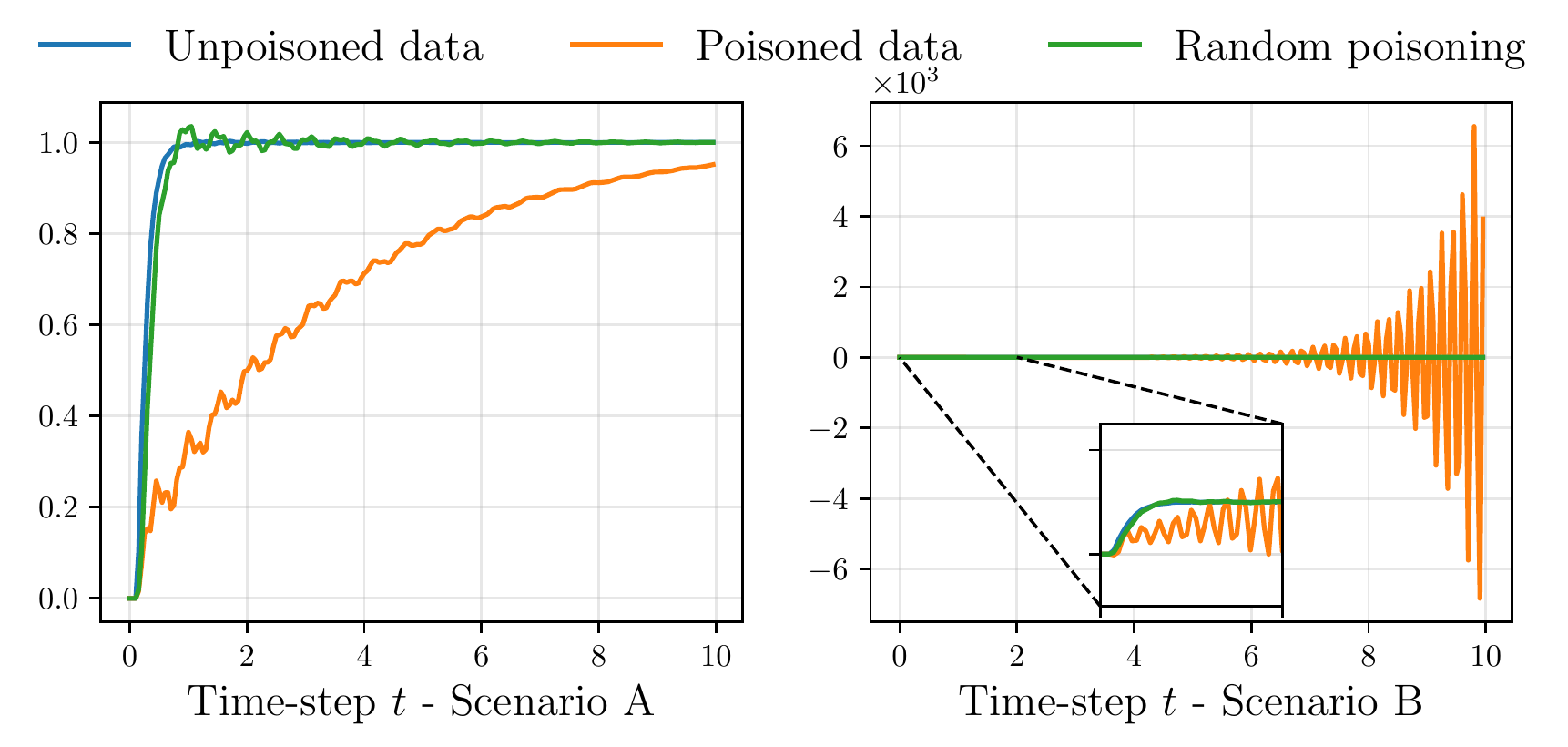}
		\caption{Closed loop system step response for $(\varepsilon_u,\varepsilon_y)=(0.1,0.057)$. }
		\label{fig:step}
	\end{subfigure}
\end{figure*}
As an example, in Figure \oldref{fig:input_output_data_whitenoise}, we show data from a simulation of scenario (B) for the case $\varepsilon_u=0.1,\varepsilon_y=0.057$. We also compare the attacks with a white noise poisoning signal constrained using a fixed $\delta_u$ with $\varepsilon_u=0.2$.

\medskip
\noindent
\textbf{Attack analysis.}
In Figures \oldref{fig:loss}-\oldref{fig:step}, we show the average learner's loss for the various values of $(\varepsilon_u,\varepsilon_y)$ and the step response, for both scenarios (A) and (B). We also show the loss value in the case of a random attack where each attack point is white noise (WN) with constraint $\varepsilon_u=0.2$. In Figure \oldref{fig:loss}, due to the randomness in scenario (B) and non-convexity with respect to $\boldsymbol{a}_y$, we have taken the average value of the loss over $512$ simulations (the shadowed area, though very small, displays the $95\%$ confidence interval). 

First, we observe from Figures \oldref{fig:loss}-\oldref{fig:step} that randomly attacking the dataset does not perturb performance in a significant way. The loss due to random Gaussian poisoning is significantly small compared to the loss due to the poisoning attacks. This indeed shows that our algorithm is exploiting the dynamics of the system in order to compute a poisoning attack.

Then, we observe that the loss in scenario (A) seems to increase more quickly compared to that in scenario (B), but this is not an indication of poor performances. Despite a flat increase of the learner's loss in scenario (B), which may be explained by the usage of persistently exciting data, we observe that a successful poisoning attack may cause more damage in scenario (B).  
\begin{table}[H]
\begin{tabular}{lccccc}
\hline
\rowcolor{Gray}
\multicolumn{1}{l|}{\textbf{Scenario A - $\varepsilon_y$}} & $0.01$   & $0.023$  & $0.037$  & $0.05$  & $0.057$   \\ \hline
\multicolumn{1}{l|}{$\varepsilon_u=0$}                                                        & $0\%$    & $0\%$    & $0\%$    & $0\%$ & $0\%$ \\
\multicolumn{1}{l|}{$\varepsilon_u=0.1$}                                                     & $0.78\%$    & $0.58\%$    & $13.9\%$ & $1.56\%$ & $0.78\%$ \\
\multicolumn{1}{l|}{$\varepsilon_u=0.2$}                                                      & $47.1\%$ & $29.7\%$ & $19.8\%$ & $1.76\%$ & $1.56\%$ \\ 
\rowcolor{Gray}\hline
\multicolumn{1}{l|}{\textbf{Scenario B - $\varepsilon_y$}}  & $0.01$   & $0.023$  & $0.037$  & $0.05$  & $0.057$   \\ \hline
\multicolumn{1}{l|}{$\varepsilon_u=0$}                                    & $0\%$    & $0\%$    & $0\%$    & $0.2\%$ & $0\%$ \\ 
\multicolumn{1}{l|}{$\varepsilon_u=0.1$}                                  & $0\%$    & $0\%$    & $34.4\%$ & $74.6\%$ & $75.2\%$ \\ 
\multicolumn{1}{l|}{$\varepsilon_u=0.2$}                                  & $0\%$ & $1.75\%$ & $54.7\%$ & $84.2\%$ & $88.9\%$ \\ \hline
\end{tabular}
\caption{Average proportion of unstable closed-loop systems out of $512$ simulations.}
\label{table:stability_wn}
\end{table}
As a matter of fact, from Table \oldref{table:stability_wn}, we note that instability of the closed-loop system does not seem to be correlated with the learner's loss in scenario (A), while it seems to be for scenario (B). This is also due to the choice of objective of the malicious agent, which does not try to directly maximize the closed-loop eigenvalues, but just increase the learner's loss, which is also linked to how informative the data is. This raises the interesting question, of whether using informative data may help the malicious agent in making the closed-loop system unstable.
 

Remarkably, we also observe that for $\delta_u=0$, the learner's loss slowly increases with $\delta_y$. From Corollary \oldref{corollary:minmax_bound}, we can see the presence of a multiplicative term $\|\boldsymbol{y}\|^2 \delta_u$ that  may indicate the possibility of $\set L$  diverging  for $\delta_u$ big enough. This effect can be clearly seen in the right plot of Figure \oldref{fig:loss}, where for $\varepsilon_u=0.1$ and $\varepsilon_u=0.2$ the loss starts to diverge. Moreover, from Table \oldref{table:stability_wn}, we see  that for $\varepsilon_u=0$, the closed-loop system never gets unstable. This suggests that robustness of data-driven control methods can be assessed by a strict analysis of the input data integrity.

\section{Conclusion}
In this work, we have introduced a generic bi-level optimization objective that can be used to compute attacks on data-driven control methods. We then specialized the method to the well known Virtual Reference Feedback Tuning technique. In general, this bi-level optimization problem is non-convex, whilst in the case of VRFT, it becomes convex in the input data. For the case of attacks against VRFT, we have provided an upper bound on the difference of the poisoned/unpoisoned control law parameters and introduced a min-max objective that the adversary can use to compute an attack. Our analysis and experiments have shown that the usage of exciting signals may emphasize the effect of the poisoning attack and that a strict integrity check of the input data may reduce the impact of the attack. Future analysis should focus on a more in-depth theoretical analysis of the attack and find additional defence strategies for the learner.
\section*{APPENDIX}
Here in the appendix we present some additional result.  First we show the targeted attack criterion for VRFT, and then an attack on a control law designed by means of Willems lemma.

\section{Additional attacks on VRFT}
\subsection{Targeted attack}
As mentioned in the previous section, the max-min objective does not let the attacker choose a particular behaviour of the resulting closed-loop system. Furthermore, corollary \ref{corollary:targeted_theta} shows a way to perform a targeted attack. 
A malicious agent may wish to perform a more sophisticated attack, and try to poison the dataset so that the closed-loop behaviour follows a certain desired reference model. The attacker can encapsulate this desired behavior in a transfer function $M_a(z)$ and perform the VRFT method to find the corresponding vector $\theta_a$ (observe that the control architecture does not change and the malicious agent is  constrained to the set of reference models $\left\{\frac{K_\theta(z)G(z)}{1+K_\theta(z)G(z)}\right\}$). Then, the attack is simply minimizing the distance between $\theta_a$ and $\hat{\theta}$:
\begin{equation}\label{eq:op_vrft_targeted}
\begin{aligned}
\min_{\boldsymbol{u}' \boldsymbol{y}'}\quad &\set A(\boldsymbol{u},\boldsymbol{y}, \hat \theta(\boldsymbol{u}', \boldsymbol{y}'))\coloneqq \frac{1}{n_k}\|\theta_a-\hat{\theta}(\boldsymbol{u}',\boldsymbol{y}')\|_2^2& \\
\textrm{s.t.} \quad & \hat{\theta}(\boldsymbol{u}', \boldsymbol{y}') = \left(\Phi^{\top}(\boldsymbol{y}') \Phi(\boldsymbol{y}') \right)^{-1}\Phi^{\top}(\boldsymbol{y}') \boldsymbol{u}'\\
&\|\boldsymbol{u}'-\boldsymbol{u}\|_{q_u} \leq \delta_u,\quad \|\boldsymbol{y}'-\boldsymbol{y}\|_{q_y} \leq \delta_y
\end{aligned}
\end{equation}
 This objective in a sense mimics  the original VRFT objective in which the attacker tries to minimize criterion \ref{eq:criterion_J} $\left\|M_a(z)-[(I-M_a)GK_\theta](z)\right\|_2^2$ with respect to the new reference model $M_a(z)$. Also, as pointed out in corollary \ref{corollary:targeted_theta} for $\delta_u$ sufficiently large the attacker can choose any $\hat\theta \in \mathbb{R}^{n_k}$.
\begin{lemma}
For a fixed vector $\boldsymbol{y}'$ the targeted attack is convex in $\boldsymbol{a}_u$, and the optimal poisoning signal $\boldsymbol{a}_u$ is
\begin{equation}
\boldsymbol{a}_u =\begin{cases}
\left(P^{\top}P\right)^{\dagger}P^{\top}\theta_a-\boldsymbol{u} \qquad \hbox{ if } \footnotesize{\|\left(P^{\top}P\right)^{\dagger}P^{\top}\theta_a-\boldsymbol{u}\|_{q_u} \leq \delta_u}\\ \frac{\left(P^{\top}P\right)^{\dagger}P\theta_a-\boldsymbol{u}}{\|\left(P^{\top}P\right)^{\dagger}P^{\top}\theta_a-\boldsymbol{u}\|_{q_u}}\delta_u \quad \hbox{otherwise}
\end{cases}
\end{equation}
where $P= ( \Phi^{\top}(\boldsymbol{y}')\Phi(\boldsymbol{y}'))^{-1}\Phi^{\top}(\boldsymbol{y}')$.
\end{lemma}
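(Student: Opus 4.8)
The plan is to first freeze $\boldsymbol{y}'$, so that $\Phi(\boldsymbol{y}')$ and hence the matrix $P=(\Phi^{\top}(\boldsymbol{y}')\Phi(\boldsymbol{y}'))^{-1}\Phi^{\top}(\boldsymbol{y}')$ are constant. With $\boldsymbol{y}'$ fixed, $\hat{\theta}(\boldsymbol{u}',\boldsymbol{y}')=P\boldsymbol{u}'=P(\boldsymbol{u}+\boldsymbol{a}_u)$ is an affine function of $\boldsymbol{a}_u$, so the objective $\set A=\frac{1}{n_k}\|\theta_a-P(\boldsymbol{u}+\boldsymbol{a}_u)\|_2^2$ is a squared Euclidean norm composed with an affine map, hence convex in $\boldsymbol{a}_u$. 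Together with the convex feasible set $\{\|\boldsymbol{a}_u\|_{q_u}\le\delta_u\}$, this makes the $\boldsymbol{a}_u$-subproblem a convex program and settles the convexity claim.

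Next I would treat the unconstrained minimizer. Setting $\nabla_{\boldsymbol{a}_u}\set A=\frac{2}{n_k}P^{\top}(P\boldsymbol{u}'-\theta_a)=0$ yields the normal equations $P^{\top}P\boldsymbol{u}'=P^{\top}\theta_a$, whose minimum-norm solution is $\boldsymbol{u}'=(P^{\top}P)^{\dagger}P^{\top}\theta_a$ (using the identity $(P^{\top}P)^{\dagger}P^{\top}=P^{\dagger}$), i.e. $\boldsymbol{a}_u^{\star}=(P^{\top}P)^{\dagger}P^{\top}\theta_a-\boldsymbol{u}$. Since $\Phi(\boldsymbol{y}')$ has full column rank, $P$ is surjective and this candidate drives $\set A$ to its global minimum; hence if $\boldsymbol{a}_u^{\star}$ already satisfies $\|\boldsymbol{a}_u^{\star}\|_{q_u}\le\delta_u$ it is feasible and optimal, which is the first branch.

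For the second branch I would use that, when $\boldsymbol{a}_u^{\star}$ is infeasible, convexity forces the optimum onto the boundary $\|\boldsymbol{a}_u\|_{q_u}=\delta_u$. The key computation is that along the ray $\boldsymbol{a}_u(c)=c\,\boldsymbol{a}_u^{\star}$ one has $\hat{\theta}(c)=P\boldsymbol{u}+c\,P\boldsymbol{a}_u^{\star}=(1-c)P\boldsymbol{u}+c\,\theta_a$, so $\theta_a-\hat{\theta}(c)=(1-c)(\theta_a-P\boldsymbol{u})$ and the objective $\frac{(1-c)^2}{n_k}\|\theta_a-P\boldsymbol{u}\|_2^2$ decreases monotonically in $c$. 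Taking the largest feasible step $c=\delta_u/\|\boldsymbol{a}_u^{\star}\|_{q_u}$ then reproduces the scaled expression of the ``otherwise'' branch.

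The step I expect to be the main obstacle is precisely this boundary case. The Hessian $\tfrac{2}{n_k}P^{\top}P$ is rank deficient, so the level sets of $\set A$ are degenerate -- elongated along $\ker P$ -- rather than spherical. Hence the radial scaling of $\boldsymbol{a}_u^{\star}$ is only manifestly optimal \emph{along its own ray}: the reachable set $\{P\boldsymbol{a}_u:\|\boldsymbol{a}_u\|_{q_u}\le\delta_u\}$ is an ellipsoid, and its closest point to $\theta_a-P\boldsymbol{u}$ need not lie on the segment toward $\theta_a$. A fully rigorous argument would therefore analyse the KKT stationarity condition $(P^{\top}P+\lambda I)\boldsymbol{a}_u=P^{\top}(\theta_a-P\boldsymbol{u})$ and either verify that the stated radial point satisfies it or correct it through the resulting one-dimensional secular equation in $\lambda$. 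I would also note that the case split itself -- testing feasibility of the minimum-norm solution -- is a simplification, since the zero-objective affine subspace may meet the constraint ball even when its minimum-norm point does not. The remaining ingredients (convexity, the first-order conditions, and the pseudoinverse computation) are routine.
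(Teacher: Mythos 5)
Your proposal follows the same route as the paper's own proof---convexity of a squared norm composed with an affine map, stationarity, and the pseudoinverse solution for the interior case---so on those points you and the paper agree. The difference lies in the boundary case, and there your skepticism is exactly right: the paper's entire justification of the ``otherwise'' branch is the parenthetical instruction to solve $\nabla_{\boldsymbol{a}_u}\set A=0$ ``and then normalize it to enforce the norm constraint,'' i.e., precisely the radial-scaling heuristic you examined and found wanting. Since $P$ is a fat $n_k\times N$ matrix, $P^{\top}P$ is rank-deficient, the reachable set $\{P\boldsymbol{a}_u:\|\boldsymbol{a}_u\|_{q_u}\le\delta_u\}$ is a (degenerate) ellipsoid, and its closest point to $\theta_a-P\boldsymbol{u}$ need not lie on the ray through $\boldsymbol{a}_u^{\star}$; the correct boundary solution is characterized by the trust-region/KKT system $(P^{\top}P+\lambda I)\boldsymbol{a}_u=P^{\top}(\theta_a-P\boldsymbol{u})$ with $\lambda\ge 0$ fixed by the secular equation, as you say. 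Your objection to the case split can even be sharpened: both $(P^{\top}P)^{\dagger}P^{\top}\theta_a-\boldsymbol{u}$ and $P^{\dagger}(\theta_a-P\boldsymbol{u})$ satisfy $P\boldsymbol{a}_u=\theta_a-P\boldsymbol{u}$ and hence achieve objective zero, but only the latter is the minimum-norm such point, so zero loss may be attainable inside the constraint ball even when the lemma's feasibility test fails, and the lemma's scaled point undershoots even along its own direction. In short, your proof is rigorous exactly where the paper's is, and the gap you flag is a genuine defect of the lemma's ``otherwise'' branch and of the paper's one-line proof of it, not of your argument.
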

\begin{proof}
As in the max-min attack one can easily recognize convexity of the criterion in $\boldsymbol{u}'$. Since we  are minimizing a convex problem, we can directly find the solution by solving $\nabla_{\boldsymbol{a}_u} \set J=0$ ( and then normalize it to enforce the norm constraint). Let $\Phi' \coloneqq \Phi(\boldsymbol{y}')$  and $\Phi = \Phi(\boldsymbol{y})$, then we can compute the gradients with respect to $\boldsymbol{a}_u$  to obtain
\begin{equation}0=\nabla_{\boldsymbol{a}_u} \set A = -\frac{2}{N}\Phi'( \Phi^{'T}\Phi')^{-1}(\theta_a- ( \Phi^{'T}\Phi')^{-1}\Phi^{'T}\boldsymbol{u}').
\end{equation}
By letting $P= ( \Phi^{'T}\Phi')^{-1}\Phi^{'T}$ we easily get the result.
\end{proof}
Unfortunately the problem is still non-convex in $\boldsymbol{a}_y$. To compute the attack, one can use the algorithm proposed for the max-min objective \ref{algo1} to compute it. Instead of performing gradient ascent the algorithm performs of gradient descent, and computes the optimal input poisoning signal according to the previous lemma.\\
\renewcommand{\thefigure}{4}
\begin{figure}[t]
	\centering
		\includegraphics[width=\linewidth]{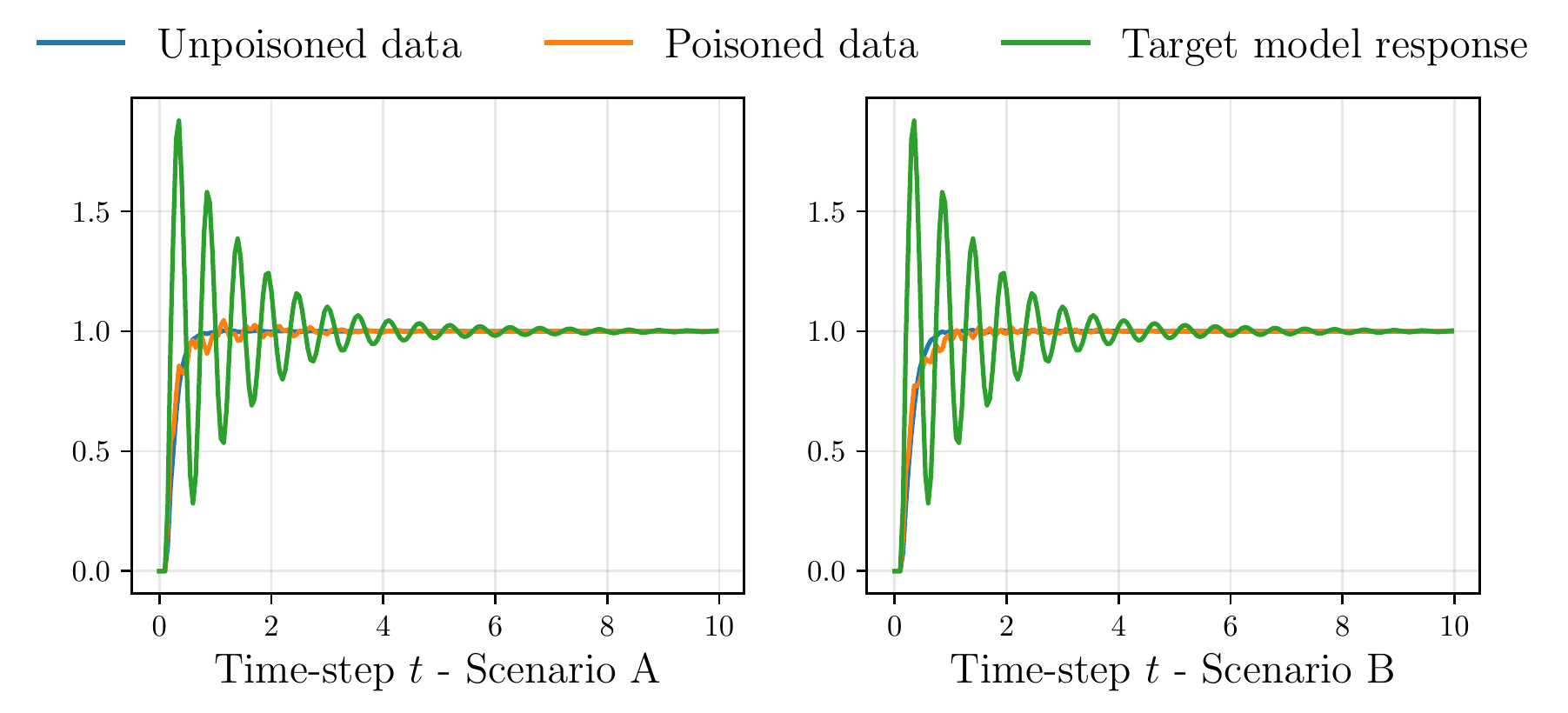}
		\caption{Closed loop step response for $\varepsilon_u=0.1$ and $\varepsilon_y=0.057$.}
		\label{fig:targeted_step}
\end{figure}

\textbf{Simulations.} We analyze the targeted criterion using the same settings as for the max-min criterion, and averaged results over $256$ simulations. As a target model we constructed one using $\theta_a=3\theta_0$, where $\theta_0$ is the solution to the original VRFT problem, taken from \cite{campi2002virtual}, which is $\theta_0\approx \begin{bmatrix}0.33& -0.61 & 0.72& -0.66& 0.48& -0.13\end{bmatrix}$. Future work could better focus on the analysis of choosing $\theta_a$, whilst here we present the idea and some numerical results. \\
\renewcommand{\thefigure}{5}
\begin{figure}[t]
	\centering
		\includegraphics[width=\linewidth]{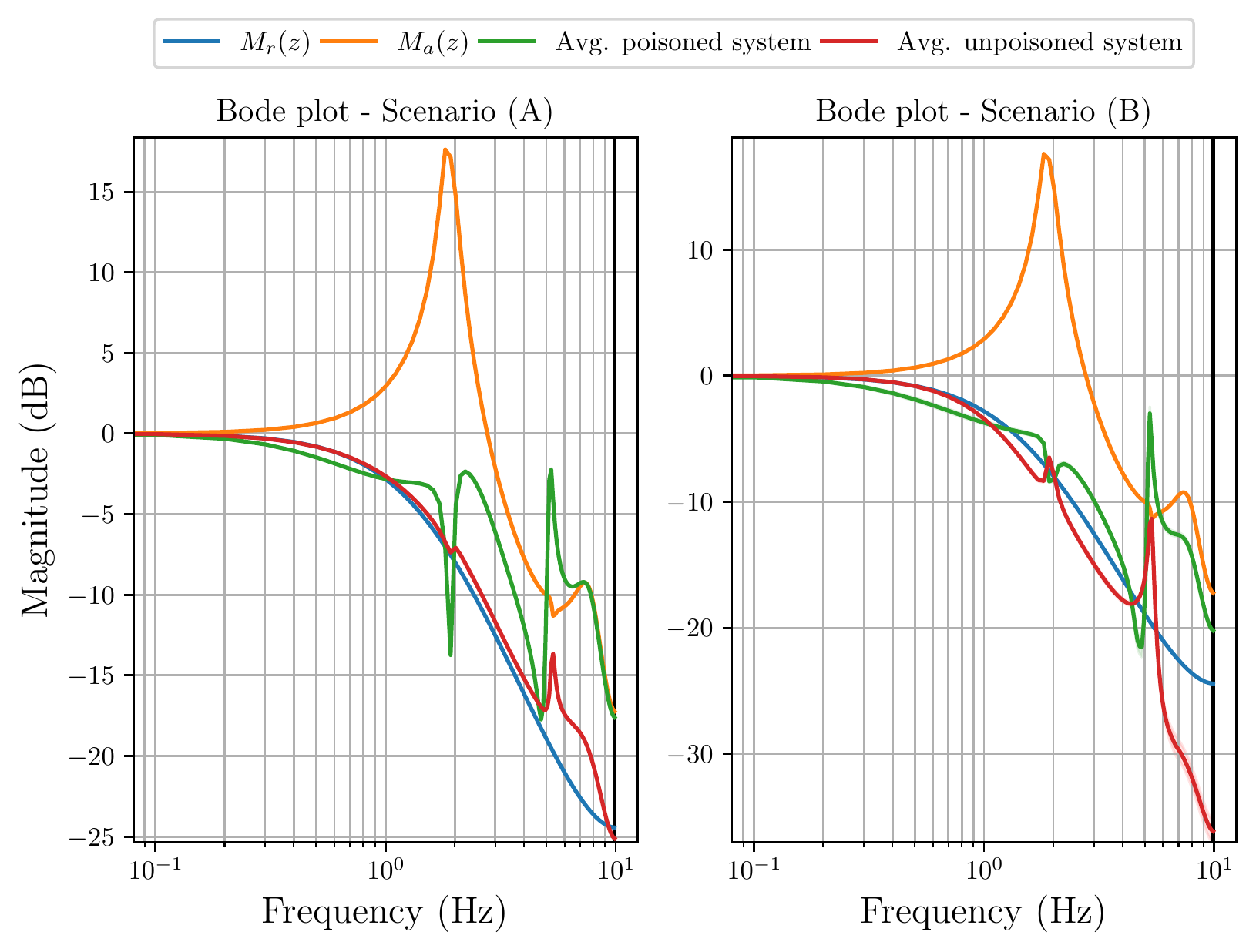}
		\caption{Closed loop Bode response: in blue the reference model; in red the average unpoisoned closed-loop system; in green the poisoned closed-loop system; in orange the target model $M_a(z)$. Averaged curves also display $95\%$ confidence interval.}
		\label{fig:targeted_bode}
\end{figure}

\renewcommand{\thefigure}{6}
\begin{figure}[b]
	\centering
		\includegraphics[width=1\linewidth]{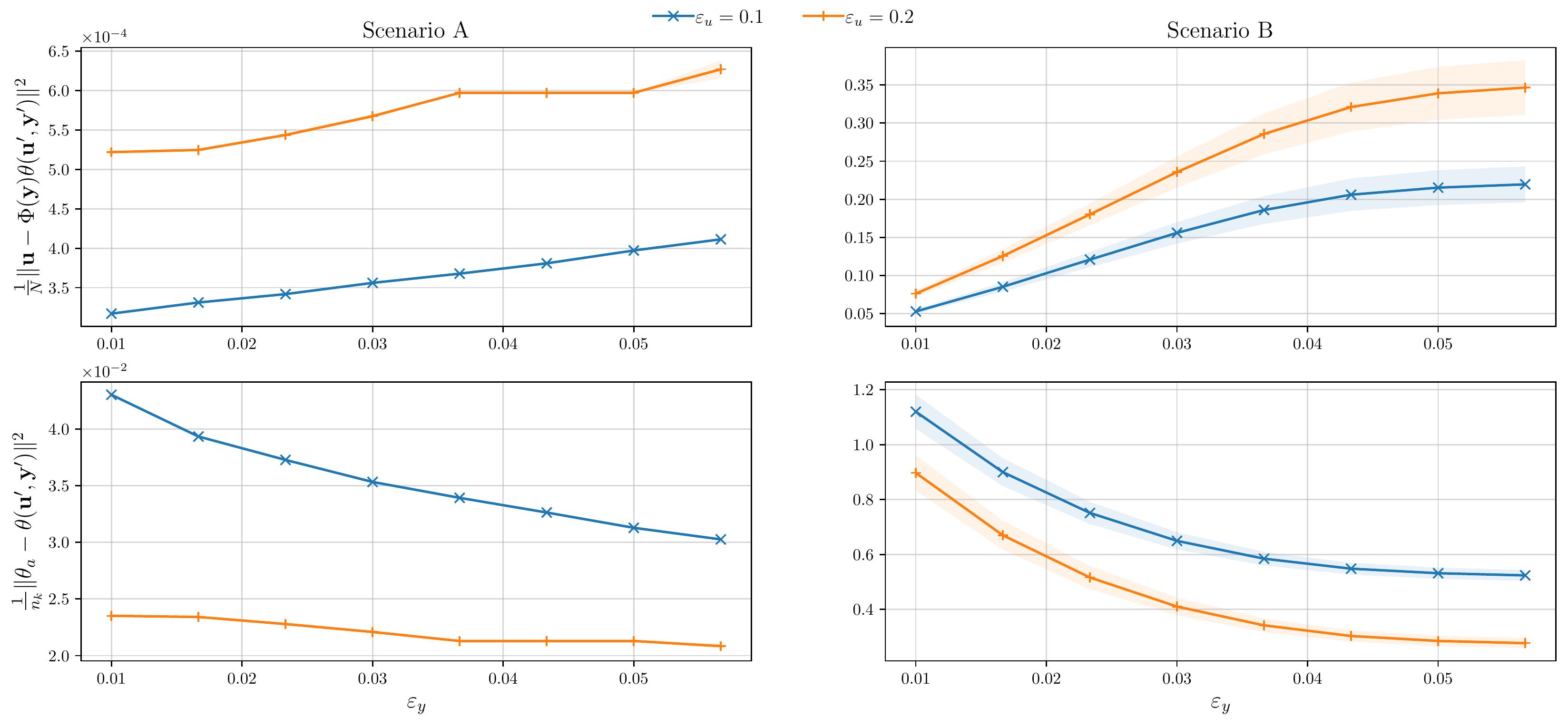}
		\caption{(Top) Loss of the learner $\set L$; (Bottom) Loss of the malicious agent $\set A$. Shadowed area displays $95\%$ confidence interval.}
		\label{fig:targeted_loss}
\end{figure}

In Figures \ref{fig:targeted_step} and \ref{fig:targeted_bode} we present respectively the step responses and bode diagrams for the different models. In Figure \ref{fig:targeted_step} the effect of the target model $M_a(z)$ is more visible than the other curves, whilst the poisoned closed-loop system seems to be barely affected by the attack. More insights can be gained by looking at the bode diagrams in Figures \ref{fig:targeted_bode}: first we observe that the poisoned closed-loop system better approximates $M_a(z)$ only at low and high frequencies, but not  in the range of frequencies where there is a peak. Furthermore, this approximation gets better in the case of scenario (A). This is in line with the reasoning provided in Proposition \ref{corollary:targeted_theta}. This also motivates a possible future work of analyzing which values of $\theta_a$ are more feasible for the malicious agent.\\

Finally, in Figure \ref{fig:targeted_loss} we show respectively: in the top row the learner's loss $\set  L$ for scenario (A) and (B); in the bottom row the loss $\set A$ of the malicious agent. Despite a rapid convergence of $\theta$ to $\theta_a$, in roughly both scenarios, this convergence is not also depicted in the bode plots. Therefore, we argue that a successful targeted attack needs to take into consideration two things: (i) feasibility of $\theta_a$ and (ii) a good choice of the malicious agent criterion (for example one that considers the frequency domain).

\section{Attacks on methods based on Willems lemma}
\subsection{Introduction}
The usage of Willems lemma \cite{willems2005note} inspired the development of new data-driven techniques. For example, in \cite{de2019formulas} they show how to derive data-driven solutions of control problem by using linear matrix inequalities in conjunction with Willems lemma. Important examples are optimal control problems.  In \cite{de2019formulas} Theorem 1,  the authors show that for the system $x_{t+1}=Ax_t+Bu_t$, we can equivalently write
\[ \begin{bmatrix}
B & A
\end{bmatrix}= X_{[1,N]}\begin{bmatrix}
U_{[N-1]}\\X_{[N-1]}
\end{bmatrix}^{\dagger}\]
where $X$ and $U$ represent data collected from the system and $\dagger$ denotes the right inverse. Observe that the above representation holds only if the input sequence is an exciting input of order $n+1$ and $\textrm{rank}\begin{bmatrix}U_{[N-1]} & X_{[N-1]}\end{bmatrix}=n+m$ \ifdefined\isextended\else (please refer to \ref{techrepor} for a definition of persistently exciting signal)\fi. Furthermore, as shown in Theorem 2 in \cite{de2019formulas}, any closed-loop system with a state-feedback control $u_t=Kx_t$ we have the following equivalent representation:
\begin{equation}x_{t+1}=X_{[1,N]}^\top G_Kx_t\end{equation}
where $G_K$ is a $T\times n$ matrix satisfying
\begin{equation}K=U_{[N-1]}^\top G_K \hbox{ and } I_n=X_{[N-1]}^\top  G_K.\end{equation}
We can equivalently write  $A+BK=X_{[1,N]}^\top G_K$: this allows to treat $G_K$ as a decision variable, and search for a matrix $G_K$ that satisfies some  performance conditions \cite{de2019formulas}.

\subsection{Optimal control design using Willems lemma}
As mentioned, one can consider the problem of finding the optimal control law using Willems lemma. We will now revise the technique shown in \cite{de2019formulas}. Consider the following system
\[x_{t+1}=Ax_t+Bu_t+\xi_t\]
where $x_t\in \mathbb{R}^n, u_t\in \mathbb{R}^m$ and $\xi_t$ is an external input to the system. The objective is to find a linear state feedback control law $u_t=Kx_t$ that minimizes the influence of the external input onto $x$ with $K\in \mathbb{R}^{m\times n}$. For that purpose, let $z_t = Q_x^{1/2} x_t$ be a performance metric signal with $Q_x \in \mathbb{R}^{n\times n}$ and $Q\succeq 0$. Define $H_K(z) \coloneqq Q_x(zI-A-BK)^{-1}$: the transfer function from $\xi$ to $z$. Then the learner wishes to learn $K$ that minimizes the $H_2$ norm of $H_K(z)$:
\[\|H_K(z)\|_2^2 = \frac{1}{2\pi}\int_{-\pi}^{\pi}\Tr\left[H_K(e^{j\omega}) H_K^{\top}(e^{-j\omega})\right]\textrm{d}\omega.\]
Minimizing $H$ is equivalent to minimizing  the mean-square deviation of $z$ when $\xi$ is a white process with unit covariance.
In \cite{de2019formulas} the authors show that optimal controller $K$ can be found using the equation $K=U_{[N-1]}Q(X_{[N-1]}Q)^{-1}$ where $Q \in \mathbb{R}^{N\times n}$ is solution of
\begin{equation}\label{eq:willems_control_h2}
\begin{aligned}
\min_{Q}\quad & \Tr\left(Q_x X_{[N-1]}Q\right)& \\
\textrm{s.t.} \quad & \begin{bmatrix}
X_{[N-1]}Q -I_n & X_{[1,N]}Q\\[0.5
em]
Q^{\top} X_{[1,N]}^{\top} & X_{[N-1]}Q
\end{bmatrix}\succeq 0.
\end{aligned}
\end{equation}
\subsection{Attack formulation}
Based on problem \ref{eq:willems_control_h2} we will now devise an attack strategy for the malicious agent. Let the poisoning signals be $A_x\in \mathbb{R}^{n\times (N+1)}$ and $A_u \in \mathbb{R}^{m \times N}$. Define the poisoned data matrices $X_{[N-1]}'\coloneqq X_{[N-1]}+A_{x,[N-1]}$ and $X_{[1,N]}'\coloneqq X_{[1,N]}+A_{x,[1,N]}$ (similarly define $U_{[N-1]}'$). Let
\[
M\left(X_{[N]}'\right)\coloneqq\begin{bmatrix}
X_{[N-1]}'Q -I_n & X_{[1,N]}'Q\\[0.5
em]
Q^{\top} X_{[1,N]}^{'T} & X_{[N-1]}'Q
\end{bmatrix}\succeq 0
\]
where the notation $X'$ indicates the poisoned version of the signals.\\
 \textbf{Attack problem. } Then we can cast the problem of finding the poisoning attack for a genetic attack criterion $\set A$ as follows
\begin{equation}\label{eq:willems_attack_general}
\begin{aligned}
&\max_{X_{[N]}', U_{[N-1]}'}\quad  \set A(\set D_N, K)& \\
\textrm{s.t.} \quad & Q = \argmin_{Q} \left\{\Tr\left(Q_x X_{[N-1]}'Q\right): M\left(X_{[N]}'\right) \succeq 0\right\}\\
& K = U_{[N-1]}' Q (X_{[N-1]}' Q)^{-1}\\
& \|X_{[N]}'- X_{[N]}\|_{q_x} \leq \delta_x\\ & \|U_{[N-1]}'-U_{[N-1]}\|_{q_u} \leq \delta_u
\end{aligned}
\end{equation}
where $q_x,q_u$ are convex norms. To solve it, since the inner problem is convex and regular, we can resort again to the Single-Level Reduction method \cite{sinha2017review,bard2013practical}, and replace the inner problem with its KKT conditions.  We get the following:

\begin{lemma}
Denote by $\mathcal{S}^n$ the space of $n\times n$ symmetric matrices  equipped with the inner product $\langle A,B \rangle=\Tr(A^{\top}B)$. Then problem \ref{eq:willems_attack_general} is equivalent to the following problem:
\begin{equation}\label{eq:willems_attack_general_dual}
\begin{aligned}
&\max_{X_{[N]}', U_{[N-1]}',Q,Z}\quad \set A(\set D_N, K)& \\
\textrm{s.t.} \quad &  Q_x X_{[N-1]}'-(Z_{11}+ Z_{22})^{\top}X_{[N-1]}'-2Z_{12}^{\top}X_{[1,N]}'=0\\
& M(X_{[N]}') \succeq 0,\quad \langle Z, M(X_{[N]}') \rangle = 0,\quad Z\succeq 0\\
& K = U_{[N-1]}' Q (X_{[N-1]}' Q)^{-1}\\
& \|X_{[N]}'- X_{[N]}\|_{q_x} \leq \delta_x\\ & \|U_{[N-1]}'-U_{[N-1]}\|_{q_u} \leq \delta_u
\end{aligned}
\end{equation}
with $Q\in \mathbb{R}^{N\times n}$ and $Z\in \set S^{2n}$.
\end{lemma}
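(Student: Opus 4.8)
The plan is to apply the Single-Level Reduction technique: since the inner minimization over $Q$ is a convex semidefinite program, its global minimizers are characterized exactly by the KKT system, so we may substitute that system for the $\argmin$ constraint without altering the feasible set of the outer problem. First I would observe that, for fixed poisoned data, the inner problem
\[
\min_Q\ \Tr(Q_x X_{[N-1]}'Q)\quad\text{s.t.}\quad M(X_{[N]}')\succeq 0
\]
has an objective that is linear in $Q$ and an LMI constraint $M(X_{[N]}')\succeq 0$ that is affine in $Q$; it is therefore convex. Provided Slater's condition holds (there is a $Q$ with $M(X_{[N]}')\succ 0$), the KKT conditions are necessary and sufficient for global optimality, so replacing the inner $\argmin$ by stationarity, primal and dual feasibility, and complementary slackness preserves both the optimal value and the induced controller $K=U_{[N-1]}'Q(X_{[N-1]}'Q)^{-1}$.

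Next I would form the Lagrangian with a single symmetric multiplier $Z\in\set S^{2n}$, $Z\succeq 0$, dual to the $2n\times 2n$ LMI,
\[
\Lambda(Q,Z) = \Tr(Q_x X_{[N-1]}'Q) - \langle Z, M(X_{[N]}')\rangle,
\]
and partition $Z$ into $n\times n$ blocks $Z_{11},Z_{22}$ on the diagonal and $Z_{12}$ off the diagonal, with $Z_{21}=Z_{12}^\top$ by symmetry. Expanding $\langle Z,M(X_{[N]}')\rangle=\Tr(ZM)$ through the block structure of $M$ (diagonal blocks $X_{[N-1]}'Q-I_n$ and $X_{[N-1]}'Q$, off-diagonal blocks $X_{[1,N]}'Q$ and its transpose), the stationarity condition $\nabla_Q\Lambda=0$ follows from matrix calculus: the objective contributes $X_{[N-1]}^{'\top}Q_x$, the two diagonal blocks contribute $X_{[N-1]}^{'\top}(Z_{11}+Z_{22})$, and the two off-diagonal blocks contribute $2X_{[1,N]}^{'\top}Z_{12}$. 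Setting the sum to zero and transposing, using the symmetry of $Q_x$, $Z_{11}$ and $Z_{22}$, gives precisely the first equality of \ref{eq:willems_attack_general_dual}. The remaining KKT relations---primal feasibility $M(X_{[N]}')\succeq 0$, complementary slackness $\langle Z,M(X_{[N]}')\rangle=0$, and dual feasibility $Z\succeq 0$---carry over unchanged, and together with the controller formula and the norm budgets they reconstitute \ref{eq:willems_attack_general_dual}.

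The main obstacle is the block-matrix gradient bookkeeping: one must differentiate each of the four blocks of $M$ with respect to $Q$ and combine the symmetric contributions correctly, which is what produces the symmetrized sum $Z_{11}+Z_{22}$ and the factor $2$ in front of $Z_{12}^\top X_{[1,N]}'$. The only nontrivial conceptual point is the constraint qualification guaranteeing strong duality and the equivalence of the KKT system with inner optimality; I would invoke Slater's condition directly, noting that for persistently exciting data the inner SDP is strictly feasible, so a stationarity multiplier $Z$ exists and the single-level reduction is exact.
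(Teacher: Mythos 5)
Your proposal is correct and follows essentially the same route as the paper: form the Lagrangian of the inner SDP with a symmetric multiplier $Z\in\set S^{2n}$, expand $\langle Z, M(X_{[N]}')\rangle$ through the block partition of $Z$ (which yields the $Z_{11}+Z_{22}$ term and the factor $2$ on $Z_{12}^\top X_{[1,N]}'$), set $\mathrm{d}L/\mathrm{d}Q=0$, and append dual feasibility, primal feasibility, and complementary slackness. Your explicit invocation of Slater's condition to justify exactness of the single-level reduction is a welcome refinement of the paper's informal appeal to the inner problem being ``convex and regular,'' but it does not change the substance of the argument.
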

\begin{proof}
Denote by $\mathcal{S}^n$ the space of $n\times n$ symmetric matrices, which is equipped with the inner product $\langle A,B \rangle=\Tr(A^{\top}B)$. We can  we can write the Lagrangian function of the inner problem as follows
\[L(X_{[N]}',Q,Z) = \Tr\left(Q_x X_{[N-1]}'Q\right) -\langle Z, M(X_{[N]}') \rangle,  \]
where $Z\in \set S^{2n}$. Express $Z$ in block form
\[Z=\begin{bmatrix}
Z_{11} & Z_{12}\\
Z_{12}^{\top} & Z_{22}
\end{bmatrix}\]
then $L$ admits the following rewriting
\begin{align*}
L(X_{[N]}',Q,Z) &= \langle Q_x^{\top}, X_{[N-1]}'Q\rangle -\langle Z, M(X_{[N]}') \rangle\\
&=\langle Q_x^{\top}, X_{[N-1]}'Q\rangle  -\langle Z_{11},X_{[N-1]}'Q -I_n \rangle\\
&\qquad -\langle Z_{22},X_{[N-1]}'Q \rangle - 2\langle Z_{12},X_{[1,N]}'Q\rangle\\
&=\langle Q_x^{\top}, X_{[N-1]}'Q\rangle +\langle Z_{11},I_n \rangle\\
&\qquad -\langle Z_{11}+ Z_{22},X_{[N-1]}'Q \rangle \\
&\qquad - 2\langle Z_{12},X_{[1,N]}'Q\rangle
\end{align*}
from which follows the first KKT condition:
\[0=\frac{\textrm{d}L}{\textrm{d}Q} = Q_x X_{[N-1]}'-(Z_{11}+ Z_{22})^{\top}X_{[N-1]}'-2Z_{12}^{\top}X_{[1,N]}'.\]
The remaining KKT conditions are $Z\succeq 0, M(X_{[N]}') \succeq 0$ and $ \langle Z, M(X_{[N]}') \rangle = 0$, from which we deduce the result.\\
\end{proof}

Unfortunately, the problem is hard to solve, not only because of the upper-level constraint on $Q$, but also because the Lagrangian constraints lead to non-convexities. The complementary conditions can  be thought of as a boolean variables, turning the problem into a mixed-integer problem \cite{sinha2017review,sinha2017evolutionary,dempe2015bilevel,dempe2019solution}. In general, one can try to use branch and bounds methods or evolutionary algorithms \cite{sinha2017review} (the latter approach has shown good performance in general for bilevel optimization). Alternatively, one can also consider solving a sequence of relaxed problems, as shown in \cite{mersha2008solution}. We will now provide a few examples of attacks.
\subsection{Max-min attack}
The first attack we propose is the max-min attack. This attack is implicitly maximizing the $H_2$ norm of $H_K(z)$, making the system more susceptible to modeling errors and process noise. To formulate it, one must be careful and observe that the malicious agent cannot merely choose $\set A(\set D_N, K)= \Tr(Q_x X_{[N-1]}Q)$. The reason is simple: the matrix $Q$ does not parametrize the system $(A,B)$ because of the dependence of $Q$ on the attack signal $A_x$. In order to have a correct parametrization, we need to enforce the constraints
\[ \begin{bmatrix}
K\\
 I_n
\end{bmatrix} = \begin{bmatrix}
U_{[N]}\\
X_{[N]}
\end{bmatrix}G_K,\]
where  $G_K$ is a decision variable and $K$ is the poisoned control law. We can derive the following attack:
\begin{lemma}\label{lemma:maxmin_attack}
Consider the optimization problem in \ref{eq:willems_attack_general}, then the optimal attack maximizing the $H_2$ norm of $H_K(z)$ can be found  by solving the optimization problem
\begin{equation}
\begin{aligned}
&\max_{X_{[N]}', U_{[N-1]}',W,G_K}\quad  \Phi(Q_xW)& \\
\textrm{s.t.} \quad & Q = \argmin_{Q} \left\{\Tr\left(Q_x X_{[N-1]}'Q\right): M(A_x) \succ 0\right\}\\
& X_{[1,N]	}G_K W G_K^{\top} X_{[1,N]}^{\top} -W+I_n= 0,\quad W\succeq I_n \\
& U_{[N-1]}G_K =  U_{[N-1]}' Q (X_{[N-1]}' Q)^{-1}\\
& X_{[N-1]}G_K = I_n\\
& \|X_{[N]}'- X_{[N]}\|_{q_x} \leq \delta_x\\ & \|U_{[N-1]}'-U_{[N-1]}\|_{q_u} \leq \delta_u
\end{aligned}
\end{equation}
 where $W \in \set S^{n}$ and $\Phi$ satisfies
 \[\Phi(Q_xW) = \begin{cases}
 \Tr(Q_xW) \hbox{ if } \max_i|\lambda_i(X_{[1,N]}G_K)| \leq 1\\
 \infty
 \end{cases}\] \end{lemma}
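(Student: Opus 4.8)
The plan is to show that the displayed program is a faithful, data-based reformulation of the attacker's goal ``maximize $\|H_K(z)\|_2^2$ over admissible poisonings,'' where $K$ is the controller the learner extracts from the poisoned data. I would organize the argument into three parts: (i) a correct data-based parametrization of the \emph{true} closed loop induced by $K$; (ii) the Gramian/Lyapunov characterization of the $H_2$ cost; and (iii) the encoding of stability through $W \succeq I_n$ together with the case-split defining $\Phi$.

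First I would settle the parametrization, which is the conceptual crux flagged just before the lemma. The learner solves the inner SDP \ref{eq:willems_control_h2} on the poisoned matrices and returns $K = U_{[N-1]}' Q (X_{[N-1]}' Q)^{-1}$. The quantity $\Tr(Q_x X_{[N-1]}'Q)$ is the $H_2$ value of the \emph{fictitious} plant described by the poisoned data, not of the true plant in closed loop with $K$, so it cannot serve as the attacker's objective. To recover the genuine closed-loop matrix I would invoke Theorem~2 of \cite{de2019formulas} applied to the \emph{clean} data: introduce $G_K$ subject to $X_{[N-1]} G_K = I_n$ and $U_{[N-1]} G_K = K$, which certifies $F := X_{[1,N]} G_K = A + BK$. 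Substituting the learned $K$ yields exactly the two constraints $X_{[N-1]} G_K = I_n$ and $U_{[N-1]} G_K = U_{[N-1]}' Q (X_{[N-1]}' Q)^{-1}$ of the program, so $F = X_{[1,N]} G_K$ is guaranteed to be the true dynamics driven by the attacker-induced controller.

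Next I would recall the standard Gramian identity. Reading $H_K(z) = Q_x^{1/2}(zI - F)^{-1}$ (consistent with the performance output $z_t = Q_x^{1/2} x_t$ and with unit-covariance disturbance entering through the identity), the squared $H_2$ norm equals $\Tr(Q_x^{1/2} W Q_x^{1/2}) = \Tr(Q_x W)$ by cyclicity of the trace, where $W$ is the controllability Gramian solving the discrete Lyapunov equation $F W F^\top - W + I_n = 0$. Replacing $F$ by its data representation $X_{[1,N]} G_K$ turns this into the quadratic matrix equation $X_{[1,N]} G_K W G_K^\top X_{[1,N]}^\top - W + I_n = 0$ of the statement, and $\Tr(Q_x W)$ matches the objective $\Phi(Q_x W)$ on the stable branch.

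Finally I would tie together feasibility and stability. On one side, whenever $F$ is Schur the Gramian $W = \sum_{k \ge 0} F^k (F^k)^\top \succeq I_n$, so the constraint $W \succeq I_n$ is consistent and non-restrictive. On the other side, by the discrete Lyapunov stability theorem any $W \succeq I_n$ (hence $W \succ 0$) solving $F W F^\top - W = -I_n \prec 0$ forces $\max_i |\lambda_i(X_{[1,N]} G_K)| \le 1$; moreover for such Schur $F$ the Lyapunov map $W \mapsto F W F^\top - W$ is invertible, so $W$ is pinned uniquely to the Gramian and $\Tr(Q_x W) = \|H_K\|_2^2$. Thus on the feasible set the objective coincides with the true $H_2$ cost, while instability, where the norm is unbounded, is absorbed by the second branch of $\Phi$ assigning value $\infty$, which is exactly what the attacker's maximization should reward. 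I expect the main obstacle to be making this last part airtight: one must argue both implications (the Gramian is feasible, and feasibility certifies stability and uniqueness of $W$), and justify that the two-valued $\Phi$ correctly merges the bounded stable case with the unbounded unstable case inside a single maximization, together with a brief remark that the inner SDP's minimizer $Q$ and the inverse $(X_{[N-1]}'Q)^{-1}$ are well defined so that $K$, and hence $F$, is unambiguous.
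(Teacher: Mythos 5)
Your proposal is correct and follows essentially the same route as the paper's own proof: the Gramian characterization $\|H_K\|_2^2 = \Tr(Q_x W)$ with $W$ solving the discrete Lyapunov equation $FWF^\top - W + I_n = 0$, the data-based parametrization $A+BK = X_{[1,N]}G_K$ obtained from Theorem 2 of \cite{de2019formulas} with $K = U_{[N-1]}'Q(X_{[N-1]}'Q)^{-1}$, and the case split in $\Phi$ to handle instability. Your part (iii) is in fact more careful than the paper's terse remark that ``$W$ cannot be computed'' in the unstable case, since you verify both directions (the Gramian satisfies $W \succeq I_n$ when $F$ is Schur, and any feasible $W \succeq I_n$ certifies Schur stability and pins $W$ down uniquely), but this added rigor is a refinement of the same argument rather than a different approach.
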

 \begin{proof}
 The result follows from the fact that maximizing the $H_2$ norm of $H_K(z)$ is equivalent to maximizing the trace of $Q_x W$ where $W$ is the controllability Gramian of the closed-loop system $A+BK$. The matrix $W$ for a stable closed-loop system can be found by solving
 \[(A+BK)W(A+BK)^{\top}-W+I_n=0\]
 for a symmetric matrix $W\succeq I_n$.
 Due to Theorem 2 in \cite{de2019formulas} we have $A+BK=X_{[1,N]}G_K$ and  
\[ \begin{bmatrix}
K\\
 I_n
\end{bmatrix} = \begin{bmatrix}
U_{[N]}\\
X_{[N]}
\end{bmatrix}G_K.\] Then the result follows by observing that   $K= U_{[N]}' Q (X_{[N]}' Q)^{-1}$. In case $A+BK$ is unstable, then thematrix  $W$ cannot be computed, from which follows the definition of $\Phi$.\\
\end{proof}

The above formulation of the attack problem has  several limitations, mainly due to the computation of the Gramian matrix of $A+BK$. We believe instead that an alternative formulation of the problem, easier to compute, is the overall maximization of the closed-loop eigenvalues.
\subsection{Eigenvalues attack}
To maximize the closed-loop eigenvalues one can use the fact that the following equality holds
\begin{align*}\det((A+BK)(A+BK)^{\top})= \prod_{i=1}^n |\lambda_i(A+BK)|^2\end{align*}
Therefore maximizing $\det((A+BK)(A+BK)^{\top})$ can be used as a proxy to maximize the absolute value of the closed-loop eigenvalues. For simplicity, we will consider the  concave criterion $\log(\det((A+BK)(A+BK)^{\top}))$ (note that one could alternatively maximize the geometric mean of the squared eigenvalues $\sqrt[n]{\det((A+BK)(A+BK)^{\top}}$, which is a concave function). Using Theorem 2 in \cite{de2019formulas} one can  write $\log(\det(X_{[1,N]}G_KG_K^{\top} X_{[1,N]}^{\top}) )$ with $G_K$ satisfying 
\[ \begin{bmatrix}
K\\
 I_n
\end{bmatrix} = \begin{bmatrix}
U_{[N]}\\
X_{[N]}
\end{bmatrix}G_K\] 
and $K=U_{[N]}' Q (X_{[N]}' Q)^{-1}$ (where $Q$ is solution of the inner problem). Therefore we obtain the following optimization problem

\begin{lemma}\label{lemma:eigenvalues_attack}
Consider the optimization problem in \ref{eq:willems_attack_general}, then we can maximize the closed-loop eigenvalues by solving the following optimization problem
\begin{equation}
\begin{aligned}
&\max_{X_{[N]}', U_{[N-1]}', G_K}\quad  \log(\det(X_{[1,N]}G_KG_K^{\top} X_{[1,N]}^{\top})) & \\
\textrm{s.t.} \quad & Q = \argmin_{Q} \left\{\Tr\left(Q_x X_{[N-1]}'Q\right): M(X_{[N]}') \succeq 0\right\}\\
& U_{[N-1]}G_K =  U_{[N-1]}' Q (X_{[N-1]}' Q)^{-1}\\
& X_{[N-1]}G_K = I_n\\
& \|X_{[N]}'- X_{[N]}\|_{q_x} \leq \delta_x\\ & \|U_{[N-1]}'-U_{[N-1]}\|_{q_u} \leq \delta_u
\end{aligned}
\end{equation} \end{lemma}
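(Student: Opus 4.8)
The plan is to mirror the derivation of Lemma \oldref{lemma:maxmin_attack}, reducing the abstract eigenvalue-maximization objective to the explicit data-driven program in the statement. First I would fix the surrogate objective: destabilizing the closed loop amounts to pushing the poles $\lambda_i(A+BK)$ outside the unit disk, so a natural proxy is to maximize $\prod_{i=1}^n |\lambda_i(A+BK)|^2 = \det\big((A+BK)(A+BK)^\top\big)$, which is exactly the determinant identity recorded before the lemma. Since this product is a monomial in the pole magnitudes, taking logarithms preserves the argmax while yielding the concave objective $\log\det\big((A+BK)(A+BK)^\top\big)$; I would remark that $\sqrt[n]{\det(\cdot)}$ is an equally valid concave alternative.

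The second step is to express $A+BK$ purely in terms of the clean data. As in the max-min case, the crucial observation is that the inner-problem variable $Q$ is tied to the poisoned matrices $X_{[N-1]}'$ and $U_{[N-1]}'$, and therefore does not parametrize the true pair $(A,B)$. To recover the genuine closed-loop matrix I would introduce an independent decision variable $G_K$ and invoke Theorem 2 of \cite{de2019formulas} applied to the clean data: provided the rank/excitation condition holds for the unpoisoned experiment, the poisoned gain $K$ admits a $G_K$ with $\begin{bmatrix}K \\ I_n\end{bmatrix} = \begin{bmatrix}U_{[N-1]} \\ X_{[N-1]}\end{bmatrix}G_K$ and $A+BK = X_{[1,N]}G_K$. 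Substituting the poisoned control law $K = U_{[N-1]}'Q(X_{[N-1]}'Q)^{-1}$ into the gain-matching equality then couples the inner solution to $G_K$ precisely as in the stated constraints $U_{[N-1]}G_K = U_{[N-1]}'Q(X_{[N-1]}'Q)^{-1}$ and $X_{[N-1]}G_K = I_n$.

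Combining the two steps gives $\det\big((A+BK)(A+BK)^\top\big) = \det\big(X_{[1,N]}G_K G_K^\top X_{[1,N]}^\top\big)$, so the surrogate becomes $\log\det\big(X_{[1,N]}G_K G_K^\top X_{[1,N]}^\top\big)$; carrying along the inner $\argmin$, the two equality constraints on $G_K$, and the amplitude bounds $\|X_{[N]}'-X_{[N]}\|_{q_x}\leq \delta_x$ and $\|U_{[N-1]}'-U_{[N-1]}\|_{q_u}\leq \delta_u$ reproduces the claimed problem verbatim.

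The main obstacle I anticipate is well-definedness rather than the algebra: for the $\log\det$ objective to be finite one needs $X_{[1,N]}G_K$ to be a genuine full-rank $n\times n$ matrix, i.e. $X_{[1,N]}G_K G_K^\top X_{[1,N]}^\top \succ 0$, and one must verify that the persistency-of-excitation condition underlying Theorem 2 of \cite{de2019formulas} survives on the clean data so that the identity $A+BK = X_{[1,N]}G_K$ is valid. A secondary difficulty is that the reformulation does not convexify the attack: although $\log\det$ is concave in its matrix argument, the bilinear coupling through $K = U_{[N-1]}'Q(X_{[N-1]}'Q)^{-1}$ together with the inner $\argmin$ keeps the overall feasible set non-convex, so the lemma merely re-expresses the eigenvalue attack in a data-driven form, and one would still invoke the single-level-reduction and relaxation machinery discussed earlier for its numerical solution.
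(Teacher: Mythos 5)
Your proposal is correct and follows essentially the same route as the paper: the paper gives no formal proof for this lemma, only the preceding discussion, which matches your argument step for step — the identity $\det((A+BK)(A+BK)^{\top})=\prod_{i}|\lambda_i(A+BK)|^2$ as the eigenvalue proxy, the $\log\det$ (or geometric-mean) concave surrogate, and the clean-data parametrization $A+BK=X_{[1,N]}G_K$ via Theorem 2 of \cite{de2019formulas} with the poisoned gain $K=U_{[N-1]}'Q(X_{[N-1]}'Q)^{-1}$ substituted into the gain-matching constraints. Your added caveats (full rank of $X_{[1,N]}G_K$ for finiteness of the objective, persistency of excitation on the clean data, and residual non-convexity) are consistent with the paper's own remarks following the lemma.
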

Unfortunately, the parametrization introduced by $G_K$ makes the problem particularly difficult, and non-convex. Not only we have the non-convexity introduced by using the single-level reduction method, but also the non-convex constraint $U_{[N-1]}G_K =  U_{[N-1]}' Q (X_{[N-1]}' Q)^{-1}$. Despite that, we can still rewrite the problem in order to devise an attack on the input signal.
By using the relationship
\[ \begin{bmatrix}
B & A
\end{bmatrix}= X_{[1,N]}\begin{bmatrix}
U_{[N-1]}\\X_{[N-1]}
\end{bmatrix}^{\dagger}\]
and the fact that $A+BK= X_{[1,N]}G_K$ and $U_{[N-1]}G_K=K$ we can discard the variable $G_K$ and  write
\begin{lemma}
\begin{equation}
\begin{aligned}
&\max_{X_{[N]}', U_{[N-1]}'}\quad  \log(\det((A+BK)(A+BK)^\top) & \\
\textrm{s.t.} \quad & Q = \argmin_{Q} \left\{\Tr\left(Q_x X_{[N-1]}'Q\right): M(A_x) \succeq 0\right\}\\
& K =  U_{[N-1]}' Q (X_{[N-1]}' Q)^{-1}\\
& \|A_{x}\|_{q_x} \leq \delta_x, \quad\|A_{u}\|_{q_u} \leq \delta_u
\end{aligned}
\end{equation}
\end{lemma}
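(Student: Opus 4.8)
The plan is to establish that the $G_K$-free problem in the statement is equivalent to the eigenvalue-maximization problem of Lemma \oldref{lemma:eigenvalues_attack} by eliminating the auxiliary decision variable $G_K$ through the two data-based identities of \cite{de2019formulas}. The only facts I need are Theorem~1, which gives $\begin{bmatrix}B& A\end{bmatrix}=X_{[1,N]}\begin{bmatrix}U_{[N-1]}\\ X_{[N-1]}\end{bmatrix}^{\dagger}$, and Theorem~2, which gives $A+BK=X_{[1,N]}G_K$ together with $U_{[N-1]}G_K=K$ and $X_{[N-1]}G_K=I_n$; both hold under the persistency-of-excitation condition $\rank\begin{bmatrix}U_{[N-1]}& X_{[N-1]}\end{bmatrix}=n+m$ on the clean experimental data.

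First I would rewrite the objective. Since Theorem~2 yields $X_{[1,N]}G_K=A+BK$, the Gram matrix appearing in Lemma \oldref{lemma:eigenvalues_attack} factors as $X_{[1,N]}G_KG_K^{\top}X_{[1,N]}^{\top}=(X_{[1,N]}G_K)(X_{[1,N]}G_K)^{\top}=(A+BK)(A+BK)^{\top}$, so that $\log(\det(X_{[1,N]}G_KG_K^{\top}X_{[1,N]}^{\top}))=\log(\det((A+BK)(A+BK)^{\top}))$. This reproduces the objective of the final lemma with no remaining reference to $G_K$.

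Next I would eliminate $G_K$ from the constraints. Stacking the two Theorem~2 equalities gives $\begin{bmatrix}U_{[N-1]}\\ X_{[N-1]}\end{bmatrix}G_K=\begin{bmatrix}K\\ I_n\end{bmatrix}$, and under the rank condition the right inverse is well-defined, so $G_K=\begin{bmatrix}U_{[N-1]}\\ X_{[N-1]}\end{bmatrix}^{\dagger}\begin{bmatrix}K\\ I_n\end{bmatrix}$. Substituting into $X_{[1,N]}G_K$ and invoking Theorem~1 gives $X_{[1,N]}G_K=X_{[1,N]}\begin{bmatrix}U_{[N-1]}\\ X_{[N-1]}\end{bmatrix}^{\dagger}\begin{bmatrix}K\\ I_n\end{bmatrix}=\begin{bmatrix}B& A\end{bmatrix}\begin{bmatrix}K\\ I_n\end{bmatrix}=BK+A$, which confirms the identity used in the objective and shows that for any admissible $K$ a compatible $G_K$ exists and is uniquely recovered from the clean data. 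Hence $G_K$ is redundant and may be discarded: the constraint $U_{[N-1]}G_K=U_{[N-1]}'Q(X_{[N-1]}'Q)^{-1}$ collapses, through $U_{[N-1]}G_K=K$, to the single relation $K=U_{[N-1]}'Q(X_{[N-1]}'Q)^{-1}$ with $Q$ the minimizer of the unchanged inner semidefinite program, while the perturbation-norm constraints carry over verbatim. This is precisely the problem in the statement.

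The main obstacle I anticipate is verifying the equivalence of the two feasible sets in both directions rather than the algebra itself: I must check that persistency of excitation of the clean data guarantees existence of the right inverse, so that the map $G_K\mapsto K$ is onto the realizable controllers, and conversely that every $(X_{[N]}',U_{[N-1]}')$ feasible for the reformulated problem lifts to a feasible triple $(X_{[N]}',U_{[N-1]}',G_K)$ for Lemma \oldref{lemma:eigenvalues_attack}. Care is also needed with the transpose conventions of \cite{de2019formulas} (Theorem~2 is elsewhere stated with $X_{[1,N]}^{\top}G_K$), which must be tracked so that the Gram-matrix factorization and the substitution remain consistent.
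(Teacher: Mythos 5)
Your proposal is correct and takes essentially the same approach as the paper: the paper's own (one-line) justification likewise eliminates $G_K$ by invoking Theorem 1 of \cite{de2019formulas} together with the Theorem 2 identities $A+BK=X_{[1,N]}G_K$ and $U_{[N-1]}G_K=K$, leaving only $K=U_{[N-1]}'Q(X_{[N-1]}'Q)^{-1}$ as a constraint. Your write-up merely makes explicit the stacking/right-inverse argument under the persistency-of-excitation rank condition and the two-directional feasible-set check that the paper leaves implicit.
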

which, as a corollary, allows  us to formalize an attack on the input signal:

\begin{corollary}\label{eq:lemma_optimal_input_attack_willems}
Consider a fixed attack vector $A_x$ and let $Q = \argmin_{Q} \left\{\Tr\left(Q_x X_{[N-1]}'Q\right): M(X_{[N]}') \succeq 0\right\}$. Then, the optimization problem that finds the optimal poisoning on the input signal is
\begin{equation}
\begin{aligned}
\max_{U_{[N-1]'}}\quad & \log(\det((A+BK)(A+BK)^{\top})) & \\
\textrm{s.t.} \quad & K =  U_{[N-1]}' Q (X_{[N-1]}' Q)^{-1}\\
&\|U_{[N-1]}'-U_{[N-1]}\|_{q_u} \leq \delta_u
\end{aligned}
\end{equation}
\end{corollary}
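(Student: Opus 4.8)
The plan is to obtain this corollary as a direct specialization of the preceding lemma, freezing the state-side perturbation $A_x$ and tracking which quantities remain coupled to the input perturbation $A_u$. First I would observe that once $A_x$ is fixed, the poisoned state data $X_{[N]}'=X_{[N]}+A_x$ is completely determined, and consequently so are both the semidefinite constraint $M(X_{[N]}')\succeq 0$ and the linear objective $\Tr(Q_x X_{[N-1]}'Q)$ of the inner program. Since neither of these depends on $U_{[N-1]}'$ (nor on $A_u$), the minimizer $Q=\argmin_Q\{\Tr(Q_x X_{[N-1]}'Q):M(X_{[N]}')\succeq 0\}$ is a fixed matrix. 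This decoupling of $Q$ from the input perturbation is the linchpin of the argument: it is exactly what turns the bilevel problem of the preceding lemma into a single-level problem in $U_{[N-1]}'$ alone.

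Next I would dispose of the constraints that no longer involve the remaining decision variable. The state-budget constraint $\|A_x\|_{q_x}\leq\delta_x$ constrains only the now-fixed $A_x$; assuming the chosen $A_x$ is feasible, it holds identically and may be dropped. What survives is the control-law relation $K=U_{[N-1]}'Q(X_{[N-1]}'Q)^{-1}$, which, with $Q$ and $X_{[N-1]}'$ frozen, expresses $K$ as an affine function of $U_{[N-1]}'$, together with the input-budget constraint $\|U_{[N-1]}'-U_{[N-1]}\|_{q_u}\leq\delta_u$. At this point a well-posedness remark is in order: the formula for $K$ requires $X_{[N-1]}'Q$ to be nonsingular, which I would assume to hold, since it is guaranteed whenever the persistency-of-excitation and rank conditions underlying problem \ref{eq:willems_attack_general} survive the perturbation $A_x$.

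Finally I would assemble the reduced problem. The objective $\log\det((A+BK)(A+BK)^{\top})$ depends on $U_{[N-1]}'$ only through $K$; using the clean data representation $\begin{bmatrix}B & A\end{bmatrix}=X_{[1,N]}\begin{bmatrix}U_{[N-1]}\\ X_{[N-1]}\end{bmatrix}^{\dagger}$ together with the identity $A+BK=\begin{bmatrix}B & A\end{bmatrix}\begin{bmatrix}K\\ I_n\end{bmatrix}$, the attacker can evaluate this objective directly from the recorded data and the poisoned law $K$. Collecting the surviving objective and constraints yields precisely the stated program in $U_{[N-1]}'$, which proves the corollary.

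I do not expect a genuine obstacle here, since the content is a restriction argument; the only points needing care are the independence of the inner minimizer $Q$ from $U_{[N-1]}'$ and the attendant nonsingularity of $X_{[N-1]}'Q$. Beyond that, one should resist claiming convexity of the reduced problem, since $\log\det((A+BK)(A+BK)^{\top})$ is not concave in $U_{[N-1]}'$ in general, even though $K$ is affine in it.
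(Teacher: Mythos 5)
Your proposal is correct and follows essentially the same route as the paper, which states this corollary without a separate proof as a direct specialization of the preceding lemma: fix $A_x$, note that the inner minimizer $Q$ then no longer depends on $U_{[N-1]}'$, drop the now-inactive state-budget constraint, and keep the objective expressed through $A+BK$ via the data-based representation. Your added remarks on the required nonsingularity of $X_{[N-1]}'Q$ and on not claiming concavity of $\log\det\bigl((A+BK)(A+BK)^{\top}\bigr)$ in $U_{[N-1]}'$ are consistent with the paper's own caveat that the reduced problem remains nontrivial to solve.
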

The previous problem 
is not easy to solve, but easier than the previous problems that we showed. We will now present some numerical results.
\subsection{Eigenvalues attack - Simulations}
\textbf{Plant dynamics and optimal control.} In this section, we present some numerical results of attacking control methods designed using Willems lemma. We  use the same plant being analyzed in \cite{de2019formulas}, which is  
the discretized version of a batch reactor system \cite{walsh2001scheduling} with a sampling time of $T_s=0.1[s]$. The dynamics of the plant are given by
\begin{equation*}\resizebox{\hsize}{!}{%
$A=\begin{bmatrix}
1.178 & 0.001 & 0.511 & -0.403\\
-0.051 & 0.661 & -0.011 & 0.061\\
0.076 & 0.335 & 0.560 & 0.382\\
0 & 0.335 & 0.089 & 0.849
\end{bmatrix},\quad B=\begin{bmatrix}
0.004 & -0.087\\
0.467 & 0.001 \\
0.213 & -0.235 \\
0.213 & -0.016
\end{bmatrix}$ %
}\end{equation*}
The optimal control law of this system can be found using Theorem 4 in \cite{de2019formulas}. The input signal used by the learner is white noise $u_t\sim \mathcal{N}(0,1)$, and we run our simulations for $N=15$ and $N=50$, and averaged results over $256$ simulations.\\

\renewcommand{\thefigure}{7}
\begin{figure}[t]
	\centering
		\includegraphics[width=1\linewidth]{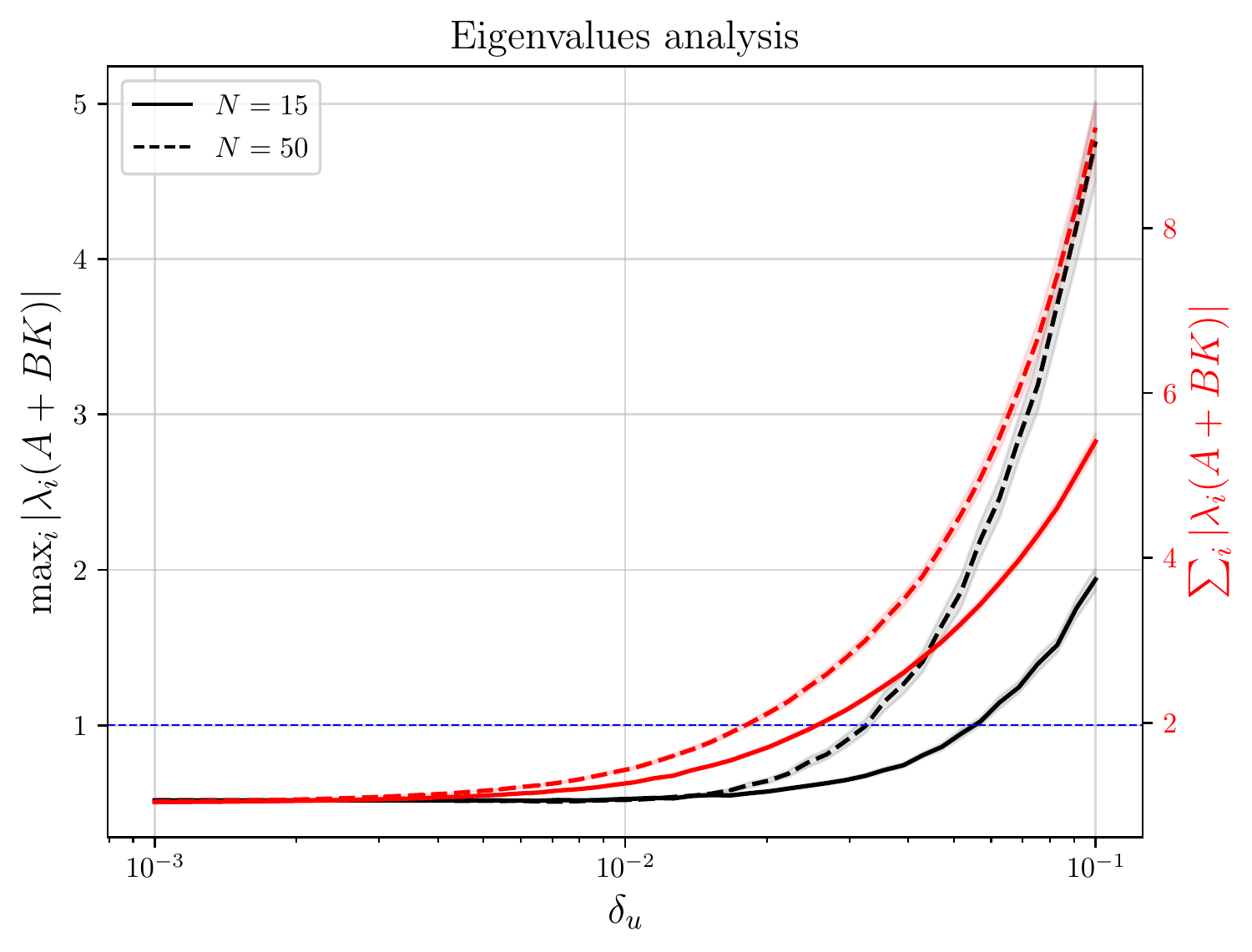}
		\caption{Eigenvalues of the poisoned closed-loop system for different values of $\delta_u$. In red is shown the sum of the absolute value of the closed-loop eigenvalues for $N=15$ and $N=50$. Similarly, in black is shown the absolute value of the maximum eigenvalue of $A+BK$.}
		\label{fig:willems_eigenvalues}
\end{figure}
\renewcommand{\thefigure}{8}
\begin{figure}[b]
	\centering
		\includegraphics[width=1\linewidth]{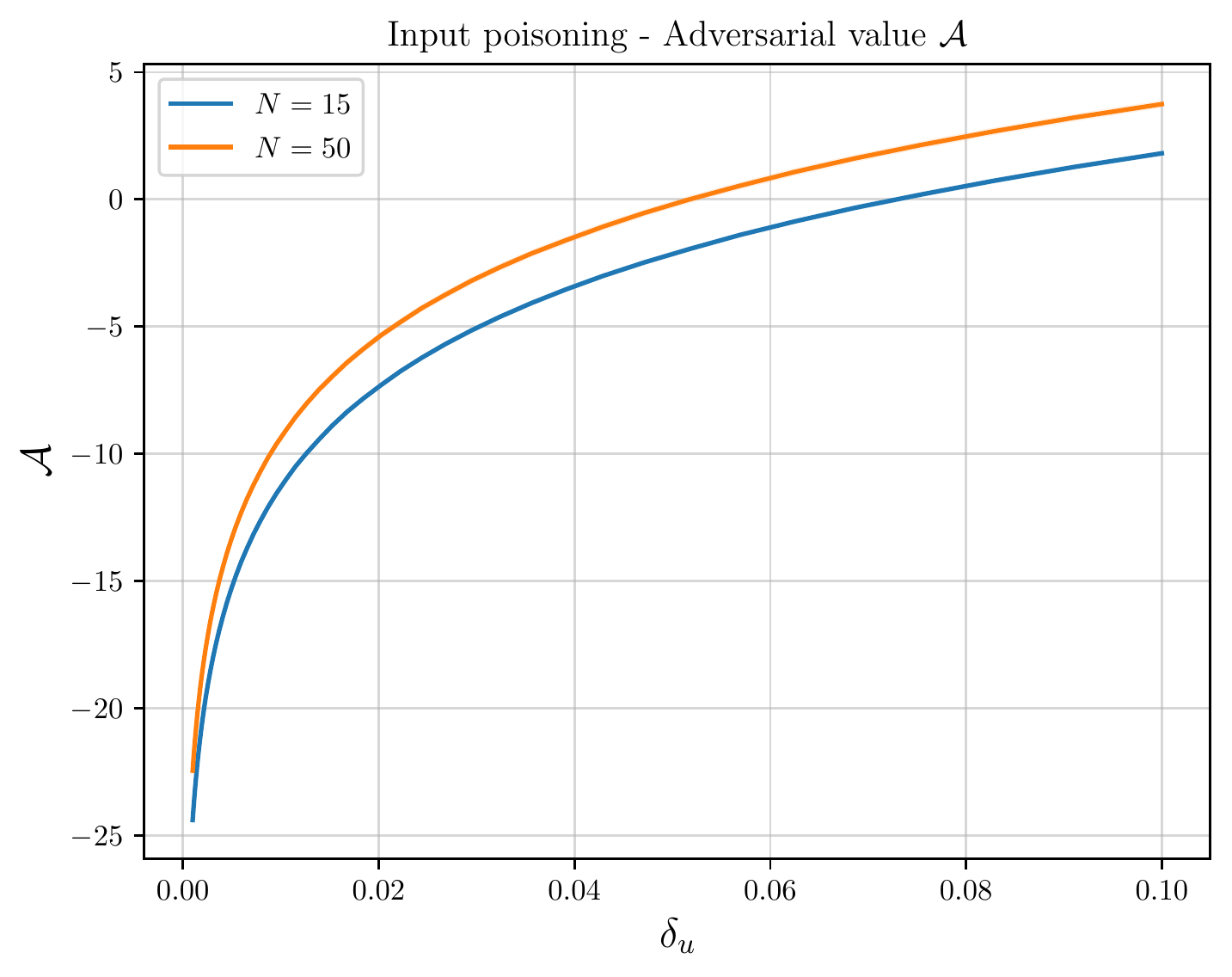}
		\caption{Adversarial objective function $\set A$  for different values of $\delta_u$.}
		\label{fig:willems_loss}
\end{figure}
\renewcommand{\thefigure}{9}
\begin{figure*}[]
	\centering
		\includegraphics[width=1\linewidth]{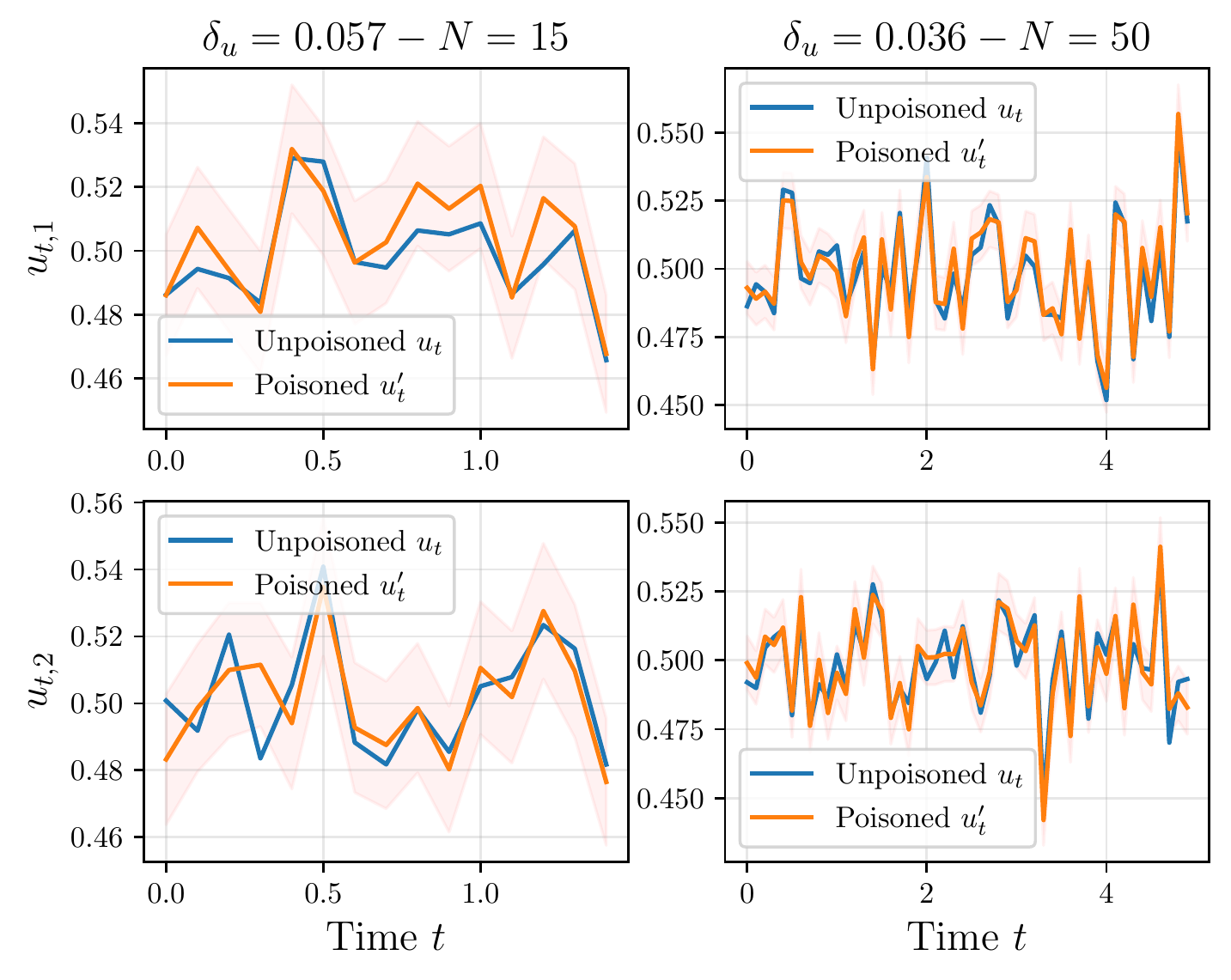}
		\caption{Plots of the input signals for $N=15$ and $N=50$. In each column are shown $u_{t,1}$ and $u_{t,2}$ and the corresponding poisoned signals. The signals are averaged out of $256$ simulations, and the orange shadowed area displays $95\%$ confidence interval for the poisoning signals.}
		\label{fig:willems_signals}
\end{figure*}
\textbf{Attack setting and properties. }For the sake of simplicity, we analyze the problem of poisoning the input signal using the optimization problem \ref{eq:lemma_optimal_input_attack_willems}, solving it using the Particle Swarm Optimizer of MATLAB. We analyze the attack objective for different values of $\delta_u$, using the $\ell_\infty$  norm. As an example, in Figure \ref{fig:willems_signals} we show some examples of poisoned and unpoisoned input signals for which the closed-loop system is unstable. \\


\textbf{Attack analysis.} 
In Figure \ref{fig:willems_eigenvalues} we show some properties of the closed-loop eigenvalues of the closed-loop system. We depict results for $N=15$ (continuous line) and $N=50$ (dashed line). On the left $y$-axis is shown the maximum absolute pole of $A+BK$ whilst on the right $y$-axis it is displayed the sum of the absolute value of the closed-loop poles. A first observation is that increasing the number of points $N$ makes the closed-loop system unstable for smaller values of $\delta_u$. In Figure \ref{fig:willems_loss}, for completeness, we also show the values of the adversarial criterion $\set A$ for different values of $\delta_u$. \\

From simulations, we see that this method is somehow less robust than VRFT. For very small values of $\delta_u$ we can get the system unstable. We have not included results on poisoning the output signals, but, preliminary results show that an attacker can make the closed-loop system unstable even for perturbations computed using the constraint $\delta_y=10^{-3}$ (even without poisoning the initial condition $x_0$). Future analysis should consider a possible way to robustify the method proposed in \cite{de2019formulas} and theoretical analysis of the attack.

\section{Software, code and hardware}
 All experiments were executed on a stationary desktop computer, featuring an Intel Xeon Silver 4110 CPU, 48GB of RAM. Ubuntu 18.04 was installed on the computer. Ubuntu is an open-source Operating System using the Linux kernel and based on Debian. For more information, please check \url{https://ubuntu.com/}.
\\\\
We set up our experiments using Python 3.7.7 \cite{van1995python} (For more information, please refer to the following link \url{http://www.python.org}), and made use of the following libraries:  NumPy version 1.18.1 \cite{oliphant2006guide}, SciPy version 1.4.1 \cite{2020SciPy-NMeth}, PyTorch version 1.4.0 \cite{paszke2017automatic}. All the code will be published on GitHub with MIT license.

\bibliographystyle{IEEEtran}
\bibliography{IEEEabrv,ref}

\end{document}